\documentclass[runningheads,a4paper]{llncs}

\usepackage{amssymb}
\usepackage{graphicx}
\usepackage{url}
\usepackage{amsmath}
\usepackage{tikz}
\usetikzlibrary{calc}
\usetikzlibrary{arrows.meta,positioning}

\begin{document}

\mainmatter 


\title{Even more properties of parity based bit-counting complexity classes}

\titlerunning{More properties of Parity based bit-counting complexity classes }

\author{ Tayfun Pay}

\authorrunning{Tayfun Pay}

\urldef{\mailsa}\path|tpay@gradcenter.cuny.edu|

\institute{{ \mailsa}}



%
%

\maketitle

\begin{abstract}
We study several additional properties of parity based bit-counting complexity classes ${\bf B_{|0| \oplus}P}$ and ${\bf B_{|1| \oplus}P}$. We first prove that  ${\bf MNS}\subseteq{\bf P}^{{\bf B_{|1|\oplus}P}}={\bf P}^{{\bf B_{|0|\oplus}P}}$ and since ${\bf C_=P}={\bf ES}={\bf MNS}$ is already known, we establish that ${\bf C_=P}={\bf ES}={\bf MNS}\subseteq{\bf P}^{{\bf B_{|1|\oplus}P}}={\bf P}^{{\bf B_{|0|\oplus}P}}$. We then prove that ${\bf PP}\subseteq{\bf P}^{{\bf B_{|1|\oplus}P}}$ and ${\bf PP}\subseteq{\bf P}^{{\bf B_{|0|\oplus}P}}$, which consequently yields ${\bf P}^{\bf PP}={\bf P}^{\bf B_{|0|\oplus}P}={\bf P}^{\bf B_{|1|\oplus}P}$. We then demonstrate that the same method can be used to prove ${\bf \# P}\subseteq{\bf FP}^{{\bf B_{|1|\oplus}P}}$ and ${\bf \# P}\subseteq{\bf FP}^{{\bf B_{|0|\oplus}P}}$. We also show that the parity based bit-counting hierarchies contain ${\bf CH}$. 
\end{abstract}

\section{Introduction}

We study some additional properties of parity based bit-counting complexity classes ${\bf B_{|0| \oplus}P}$ and ${\bf B_{|1| \oplus}P}$ that were defined in \cite{P26a} and then further studied in \cite{P26b}. We first show that you get the parity of the binary length of an integer when you xor the parity of its number of 0's bits and the parity of its number of 1's bits. Indeed, when you add a $1$ to an integer this value changes exactly when the given integer is a Mersenne number. The binary length of an integer is recoverable by queries to ${\bf B_{|0| \oplus}P}$ and ${\bf B_{|1| \oplus}P}$ oracles. We use this fact to prove that ${\bf MNS}\subseteq{\bf P}^{\bf B_{|0|\oplus}P}={\bf P}^{\bf B_{|1|\oplus}P}$. This consequently yields ${\bf ES}\subseteq{\bf P}^{\bf B_{|0|\oplus}P}={\bf P}^{\bf B_{|1|\oplus}P}$ and ${\bf C_=P}\subseteq{\bf P}^{\bf B_{|0|\oplus}P}={\bf P}^{\bf B_{|1|\oplus}P}$ since it is already known that ${\bf MNS}={\bf ES}={\bf C_=P}$. 

We then extend upon the four consecutive value theorems of \cite{P26b}, which were used to show that ${\bf \oplus P}\subseteq {\bf P}^{{\bf B_{|0|\oplus}P}}$ and ${\bf \oplus P}\subseteq {\bf P}^{{\bf B_{|1|\oplus}P}}$, and prove that both parity based bit-counting oracles can simulate ${\bf PP}$ computations. The key technical tool is the bit recovery procedure, where by padding a ${\bf\#P}$ value and querying shifted versions of that value, a deterministic oracle machine can recover the output of the counting function one bit at a time with access to either ${\bf B_{|0|\oplus}P}$ or ${\bf B_{|1|\oplus}P}$ oracles. This not only proves that ${\bf PP}\subseteq{\bf P}^{\bf B_{|0|\oplus}P}$ and ${\bf PP}\subseteq{\bf P}^{\bf B_{|1|\oplus}P}$, but also the functional containments of ${\bf \#P}\subseteq{\bf FP}^{\bf B_{|0|\oplus}P}$ and ${\bf \#P}\subseteq{\bf FP}^{\bf B_{|1|\oplus}P}$. We also show that we can improve the straightforward bit recovery method from using $4n$ queries to $2n+2$ queries by caching the repeated queries. 

We then show that ${\bf P}^{\bf PP}= {\bf P}^{{\bf B_{|0|\oplus}P}} = {\bf P}^{{\bf B_{|1|\oplus}P}}$. When this is combined with the results from \cite{P26a}, which is ${\bf P}^{\bf PP}={\bf P}^{{\bf B_{|0|=|1|}P}}={\bf P}^{{\bf B_{|0|<|1|}P}}={\bf P}^{{\bf B_{|0|>|1|}P}}$, we establish that parity based bit-counting complexity classes and comparison based bit-counting complexity classes are Turing equivalent to ${\bf PP}$. This solidifies the observation in \cite{KPZ99} that Turing reductions blur structural differences. 

We finally show that the counting hierarchy is contained in both of our parity based bit-counting hierarchies, that is  ${\bf CH}\subseteq{\bf \Sigma B_{|0|\oplus}H} = {\bf \Sigma B_{|1|\oplus}H}$ and ${\bf CH}\subseteq{\bf \Delta B_{|0|\oplus}H} = {\bf \Delta B_{|1|\oplus}H}$. 

\section{Definitions and containments}

\subsection{Some classical complexity classes and hierarchies}

\begin{definition}\normalfont
A language $L$ is in complexity class {\bf NP}, if there exists a polynomial $p$ and a polynomial time predicate $R$ such that, for each $x$,

$x \in L \Leftrightarrow ||\{y| \ |y| = p(|x|) \wedge R(x, y)\}|| >0 $
\end{definition}

\begin{definition}\normalfont
A language $L$ is in complexity class {\bf CoNP}, if there exists a polynomial $p$ and a polynomial time predicate $R$ such that, for each $x$,

$x \in L \Leftrightarrow ||\{y| \ |y| = p(|x|) \wedge R(x, y)\}|| =0 $
\end{definition}

\begin{definition}\normalfont
The polynomial hierarchy, denoted ${\bf PH}$ and as defined in \cite{S76}, is the union of the levels ${\bf \Sigma_k^P}$, ${\bf \Pi_k^P}$, and ${\bf \Delta_k^P}$. The zeroth levels are ${\bf \Sigma_0^P}={\bf \Pi_0^P}={\bf \Delta_0^P}={\bf P}$. For every $k\geq 0$, define ${\bf \Sigma_{k+1}^P}={\bf NP}^{{\bf \Sigma_k^P}}$, ${\bf \Pi_{k+1}^P}={\bf coNP}^{{\bf \Sigma_k^P}}$, and ${\bf \Delta_{k+1}^P}={\bf P}^{{\bf \Sigma_k^P}}$. The full polynomial hierarchy is ${\bf PH}=\bigcup_{k\geq 0}{\bf \Sigma_k^P}=\bigcup_{k\geq 0}{\bf \Pi_k^P}=\bigcup_{k\geq 0}{\bf \Delta_k^P}$.
\end{definition}

This first few levels of the ${\bf PH}$ are depicted in the picture below.

\begin{tikzpicture}[
    every node/.style={font=\small},
    classnode/.style={align=center}
]

\node[classnode] (d1) at (0,0) {$\Delta_1^{\bf P}={\bf P}$};

\node[classnode] (s1) at (-4,1) {$\Sigma_1^{\bf P}={\bf NP}$};
\node[classnode] (p1) at (4,1) {$\Pi_1^{\bf P}={\bf CoNP}$};

\node[classnode] (d2) at (0,2) {$\Delta_2^{\bf P}={\bf P}^{\Sigma_1^{\bf P}}$};

\node[classnode] (s2) at (-4,3) {$\Sigma_2^{\bf P}={\bf NP}^{\Sigma_1^{\bf P}}$};
\node[classnode] (p2) at (4,3) {$\Pi_2^{\bf P}={\bf CoNP}^{\Sigma_1^{\bf P}}$};

\node[classnode] (d3) at (0,4) {$\Delta_3^{\bf P}={\bf P}^{\Sigma_2^{\bf P}}$};

\node[classnode] (s3) at (-4,5) {$\Sigma_3^{\bf P}={\bf NP}^{\Sigma_2^{\bf P}}$};
\node[classnode] (p3) at (4,5) {$\Pi_3^{\bf P}={\bf CoNP}^{\Sigma_2^{\bf P}}$};

\node[classnode] (d4) at (0,6) {$\Delta_4^{\bf P}={\bf P}^{\Sigma_3^{\bf P}}$};

\draw[->] (d1) -- (s1);
\draw[->] (d1) -- (p1);

\draw[->] (s1) -- (d2);
\draw[->] (p1) -- (d2);

\draw[->] (d2) -- (s2);
\draw[->] (d2) -- (p2);

\draw[->] (s2) -- (d3);
\draw[->] (p2) -- (d3);

\draw[->] (d3) -- (s3);
\draw[->] (d3) -- (p3);

\draw[->] (s3) -- (d4);
\draw[->] (p3) -- (d4);

\end{tikzpicture}

\begin{definition}\normalfont
A language $L$ is in complexity class {\bf $\oplus$P}, as defined in \cite{PZ83}, if there exists a polynomial $p$ and a polynomial time predicate $R$ such that, for each $x$,

$x \in L \Leftrightarrow ||\{y| \ |y| = p(|x|) \wedge R(x, y)\}|| \not \equiv$ {\rm 0 (Mod 2)}
\end{definition}

\begin{definition}\normalfont
A language $L$ is in complexity class {\rm \bf C$_{=}$P}, as defined in \cite{S75}, if there exists a polynomial $p$ and a polynomial time predicate $R$ such that, for each $x$,
$x \in L \Leftrightarrow ||\{y| \ |y| = p(|x|) \wedge R(x, y)\}|| = 2^{p(|x|)-1} $
\end{definition}

\begin{definition}\normalfont
A language $L$ is in complexity class {\rm \bf ES}, as defined in \cite{BHR00}, if there exists a polynomial $p$ and a polynomial time predicate $R$ such that, for each $x$,

$x \in L \Leftrightarrow ||\{y| \ |y| = p(|x|) \wedge R(x, y)\}|| = 2^{t}$, where t $\in \mathbb{N}_{0} = \{0,1,2, ...\}$
\end{definition}

\begin{definition}\normalfont A language $L$ is in complexity class {\bf MNS}, as defined in \cite{CP18}, if there exists polynomial $p$ and a polynomial time predicate $R$ such that, for each $x$,

$x \in L \Leftrightarrow ||\{y| \ |y| = p(|x|) \wedge R(x, y)\}|| = 2^{t}-1$, where t $\in \mathbb{N}_{>0} = \{1,2,3, ...\}$
\end{definition}

\begin{definition}\normalfont
A language $L$ is in complexity class {\rm \bf PP}, as defined in \cite{S75}, if there exists a polynomial $p$ and a polynomial time predicate $R$ such that, for each $x$,

$x \in L \Leftrightarrow ||\{y| \ |y| = p(|x|) \wedge R(x, y)\}|| > 2^{p(|x|)-1} $

\end{definition}

\begin{definition}\normalfont
The counting hierarchy, denoted ${\bf CH}$ and as defined in \cite{W86}, is the union of the levels ${\bf C_kP}$. The zeroth level is ${\bf C_0P}={\bf P}$. For every $k\geq 0$, the next level is ${\bf C_{k+1}P}={\bf PP}^{{\bf C_kP}}$, where ${\bf PP}^{{\bf C_kP}}$ denotes the class of languages decidable by a ${\bf PP}$ machine with oracle access to a language in ${\bf C_kP}$. The full counting hierarchy is ${\bf CH}=\bigcup_{k\geq 0}{\bf C_kP}$. 
\end{definition}

The first few levels of the ${\bf CH}$ are ${\bf C_0P}={\bf P}$, ${\bf C_1P}={\bf PP}$, ${\bf C_2P}={\bf PP}^{\bf PP}$, and ${\bf C_3P}={\bf PP}^{{\bf PP}^{\bf PP}}$.

\begin{definition}\normalfont
Functional complexity class {\rm \bf \#P}, as defined in \cite{V79}, counts the total number of accepting paths of a non-deterministic polynomial time Turing machine.

{\bf \#P} = $\{f | (\exists$ a non-deterministic polynomial time Turing machine $M) (\forall x)$ $[f (x) = \#accept_{M}(x)]\}$.
\end{definition}

It is known that ${\bf CoNP} \subseteq {\bf MNS} = {\bf ES } = { \bf C_{=}P} \subseteq {\bf PP}$ and that ${\bf PH}\subseteq{\bf P}^{\bf PP} = {\bf P}^{{\bf\#P}} \subseteq {\bf CH}$, where it was shown in \cite{T89} that ${\bf PH}\subseteq{\bf P}^{\bf PP}$. Various other complexity classes can be found in \cite{HO02} and \cite{CP18}.

\subsection{Some bit-counting complexity classes and hierarchies}

The function $B_{0}$ counts the number of 0's bits and the function $B_{1}$ counts the number of 1's bits in the binary representation of the numbers in the following definitions. Note that $B_{0}(0)=1$. 

\begin{definition}\normalfont
A language $L$ is in complexity class ${\bf B_{|0| \oplus}P}$, if there exist a polynomial $p$ and a polynomial time predicate $R$ such that, for each $x$,  

$x \in L \Leftrightarrow B_{0}(||\{y| \ |y| = p(|x|) \wedge R(x, y)\}||) \not \equiv  {\rm 0 (Mod 2)}$
\end{definition}

\begin{definition}\normalfont
A language $L$ is in complexity class ${\bf B_{|1| \oplus}P}$, if there exist a polynomial $p$ and a polynomial time predicate $R$ such that, for each $x$,  

$x \in L \Leftrightarrow B_{1}(||\{y| \ |y| = p(|x|) \wedge R(x, y)\}||) \not \equiv  {\rm 0 (Mod 2)}$
\end{definition}

\begin{definition}\normalfont
A language $L$ is in complexity class ${\bf B_{|1|=0}P}$, if there exist a polynomial $p$ and a polynomial time predicate $R$ such that, for each $x$,
 
$x \in L \Leftrightarrow B_{1}(||\{y| \ |y| = p(|x|) \wedge R(x, y)\}||)$ $= 0$
\end{definition}

\begin{definition}\normalfont
A language $L$ is in complexity class ${\bf B_{|1|>0}P}$, if there exist a polynomial $p$ and a polynomial time predicate $R$ such that, for each $x$,
 
$x \in L \Leftrightarrow B_{1}(||\{y| \ |y| = p(|x|) \wedge R(x, y)\}||)$ $> 0$
\end{definition}

\begin{definition}\normalfont
The $\Sigma$-${\bf B_{|0|\oplus}P}$ hierarchy, denoted ${\bf \Sigma B_{|0|\oplus}H}$, is defined as follows. The zeroth level is ${\bf \Sigma B_{|0|\oplus}^{0}}={\bf P}$. The first level is ${\bf \Sigma B_{|0|\oplus}^{1}}={\bf B_{|0|\oplus}P}$. For every $k\geq 1$, define the next level by ${\bf \Sigma B_{|0|\oplus}^{k+1}}={\bf B_{|0|\oplus}P}^{{\bf \Sigma B_{|0|\oplus}^{k}}}$. The full hierarchy is ${\bf \Sigma B_{|0|\oplus}H}=\bigcup_{k\geq 0}{\bf \Sigma B_{|0|\oplus}^{k}}$.
\end{definition}

\begin{definition}\normalfont
The $\Delta$-${\bf B_{|0|\oplus}P}$ hierarchy, denoted ${\bf \Delta B_{|0|\oplus}H}$, is defined as follows. The zeroth level is ${\bf \Delta B_{|0|\oplus}^{0}}={\bf P}$. For every $k\geq 1$, define ${\bf \Delta B_{|0|\oplus}^{k}}={\bf P}^{{\bf \Sigma B_{|0|\oplus}^{k}}}$. The full hierarchy is ${\bf \Delta B_{|0|\oplus}H}=\bigcup_{k\geq 0}{\bf \Delta B_{|0|\oplus}^{k}}$.
\end{definition}

\begin{definition}\normalfont
The $\Sigma$-${\bf B_{|1|\oplus}P}$ hierarchy, denoted ${\bf \Sigma B_{|1|\oplus}H}$, is defined as follows. The zeroth level is ${\bf \Sigma B_{|1|\oplus}^{0}}={\bf P}$. The first level is ${\bf \Sigma B_{|1|\oplus}^{1}}={\bf B_{|1|\oplus}P}$. For every $k\geq 1$, define the next level by ${\bf \Sigma B_{|1|\oplus}^{k+1}}={\bf B_{|1|\oplus}P}^{{\bf \Sigma B_{|1|\oplus}^{k}}}$. The full hierarchy is ${\bf \Sigma B_{|1|\oplus}H}=\bigcup_{k\geq 0}{\bf \Sigma B_{|1|\oplus}^{k}}$.
\end{definition}

\begin{definition}\normalfont
The $\Delta$-${\bf B_{|1|\oplus}P}$ hierarchy, denoted ${\bf \Delta B_{|1|\oplus}H}$, is defined as follows. The zeroth level is ${\bf \Delta B_{|1|\oplus}^{0}}={\bf P}$. For every $k\geq 1$, define ${\bf \Delta B_{|1|\oplus}^{k}}={\bf P}^{{\bf \Sigma B_{|1|\oplus}^{k}}}$. The full hierarchy is ${\bf \Delta B_{|1|\oplus}H}=\bigcup_{k\geq 0}{\bf \Delta B_{|1|\oplus}^{k}}$.
\end{definition}

\subsection{Definitions and theorems related to parity-based bit counting complexity classes}

\begin{definition}\normalfont
The Prouhet-Thue-Morse sequence \cite{M21} \cite{S73}  is the infinite binary sequence $(t(m))_{m\ge 0}$ defined by $t(m)\equiv B_1(m)\pmod 2$, where $B_1(m)$ denotes the number of $1$'s bits in the standard binary representation of $m$. Equivalently, $t(m)=0$ if $B_1(m)$ is even, and $t(m)=1$ if $B_1(m)$ is odd. The sequence begins starting with $(t(m))_{m\ge 0}=0,$$1,1,0,1,0,0,1,1,0,0,1,0,1,1,0,\ldots$. 

It further satisfies the recursive identities $t(0)=0$, $t(2m)=t(m)$, and $t(2m+1)=1-t(m)$. In essence, the sequence contains no three consecutive equal bits, neither $000$ nor $111$ occurs as a contiguous block.

\end{definition}

It was proven in \cite{P26b} that for every $N\geq 0$, the four values $t(N),t(N+1),t(N+2),t(N+3)$ determine whether $N$ is even or odd. More precisely, $N$ is even if and only if $t(N)\neq t(N+1)$ and $t(N+2)\neq t(N+3)$.
\newpage

\begin{definition}\normalfont
The $B_0$-parity sequence \cite{B01} is the infinite binary sequence $(s(m))_{m\ge 0}$ defined by $s(m)\equiv B_0(m)\pmod 2$, where $B_0(m)$ denotes the number of $0$'s bits in the standard binary representation of $m$, with the convention that the standard binary representation of $0$ is $0$. Equivalently, $s(m)=0$ if $B_0(m)$ is even, and $s(m)=1$ if $B_0(m)$ is odd. The sequence begins starting with $(s(m))_{m\ge 0}=1,$$0,1,0,0,1,1,0,1,0,0,1,0,1,1,0,\ldots$.

It further satisfies the recursive identities $s(0)=1$, $s(1)=0$, $s(2m)=1-s(m)$ for $m\ge 1$, and $s(2m+1)=s(m)$ for $m\ge 1$. In essence, the sequence contains no three consecutive equal bits, neither $000$ nor $111$ occurs as a contiguous block.
\end{definition}

It was proven in \cite{P26b} that for every $N\geq 0$, the four values $s(N),s(N+1),s(N+2),s(N+3)$ determine whether $N$ is even or odd. More precisely, $N$ is even if and only if $s(N)\neq s(N+1)$ and $s(N+2)\neq s(N+3)$.

\subsection{Some known containments and equalities }

The following equalities follow from their respective definitions.  

-${\bf B_{|1|=0}P} ={\bf CoNP}$.

-${\bf B_{|1|>0}P} ={\bf NP}$.
\newline \newline
The following containments and equalities were proven in \cite{P26a}. 

-${\bf NP}\subseteq {\bf B_{|1| \oplus}P}$

-${\bf NP}\subseteq {\bf B_{|0| \oplus}P}$

-${\bf CoNP}\subseteq {\bf B_{|1| \oplus}P}$

-${\bf CoNP}\subseteq {\bf B_{|0| \oplus}P}$

-${\bf B_{|1| \oplus}P}\subseteq {\bf P}^{\bf PP}$

-${\bf B_{|0| \oplus}P}\subseteq {\bf P}^{\bf PP}$

-${\bf B_{|1|\oplus}P}\subseteq {\bf P}^{{\bf B_{|0|\oplus}P}}$

-${\bf B_{|0|\oplus}P}\subseteq {\bf P}^{{\bf B_{|1|\oplus}P}}$

-${\bf P}^{{\bf B_{|0|\oplus}P}}={\bf P}^{{\bf B_{|1|\oplus}P}}$
\newline \newline
The following containments and equalities were proven in \cite{P26b}. 

-${\bf B_{|1|\oplus}P}={\bf CoB_{|1|\oplus}P}$

-${\bf B_{|0|\oplus}P}={\bf CoB_{|0|\oplus}P}$

-${\bf B_{|1|\oplus}P}\subseteq {\bf B_{|0|\oplus}P}$

-${\bf \oplus P}\subseteq {\bf P}^{{\bf B_{|1|\oplus}P}}$.

-${\bf \oplus P}\subseteq {\bf P}^{{\bf B_{|0|\oplus}P}}$.

-${\bf \Sigma B_{|0|\oplus}H}={\bf \Sigma B_{|1|\oplus}H}$

-${\bf \Delta B_{|0|\oplus}H}={\bf \Delta B_{|1|\oplus}H}$

-${\bf PH}\subseteq{\bf \Sigma B_{|0|\oplus}H} = {\bf \Sigma B_{|1|\oplus}H}$

-${\bf PH}\subseteq{\bf \Delta B_{|0|\oplus}H} = {\bf \Delta B_{|1|\oplus}H}$

-${\bf \Sigma B_{|0|\oplus}H}={\bf \Sigma B_{|1|\oplus}H}\subseteq{\bf CH}$

-${\bf \Delta B_{|0|\oplus}H}={\bf \Delta B_{|1|\oplus}H}\subseteq{\bf CH}$

\section{Several properties of parity based bit-counting complexity classes}

\subsection{Containment of ${\bf C_=P}={\bf ES}={\bf MNS}$}

We first prove that when you xor the parity of the number of $1$'s bits with the parity of the number of $0$'s bits, you get exactly the parity of the binary length. So the value $t(N)\oplus s(N)$ changes between $N$ and $N+1$ precisely when the binary lengths of $N$ and $N+1$ have different parity. We then show that this occurs exactly when $N$ is a Mersenne number. We then prove the containment of ${\bf MNS}$ in ${\bf P}^{\bf B_{|0|\oplus}P}={\bf P}^{\bf B_{|1|\oplus}P}$. 

\begin{theorem} For every $N\geq 0$, $(t(N)\oplus s(N))\neq(t(N+1)\oplus s(N+1))$ if and only if $\ell(N)\not\equiv \ell(N+1)\pmod 2$
\end{theorem}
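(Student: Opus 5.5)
The plan is to first prove a pointwise identity and then read off the statement. Here $\ell(N)$ denotes the length of the standard binary representation of $N$, with the convention $\ell(0)=1$. This convention matches $B_0(0)=1$ and the representation of $0$ as the single bit $0$.

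The identity is
\[
t(N)\oplus s(N)\equiv \ell(N)\pmod 2 \qquad\text{for every } N\geq 0 .
\]
Every bit of the standard binary representation of $N$ is either a $0$ or a $1$, so $B_0(N)+B_1(N)=\ell(N)$. For $N\geq 1$ this is immediate, since the representation has no leading zeros. For $N=0$ we have $B_0(0)=1$, $B_1(0)=0$ and $\ell(0)=1$, so the identity also holds. Reducing modulo $2$ and using the definitions $t(N)\equiv B_1(N)$ and $s(N)\equiv B_0(N)\pmod 2$, together with the fact that addition modulo $2$ is xor, gives $t(N)\oplus s(N)=\ell(N)\bmod 2$. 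As a sanity check against the listed sequences: $t(0)\oplus s(0)=0\oplus1=1=\ell(0)$, $t(1)\oplus s(1)=1\oplus 0=1$, and $t(2)\oplus s(2)=1\oplus1=0=\ell(2)\bmod 2$.

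Applying the identity at $N$ and at $N+1$ shows that $t(N)\oplus s(N)$ and $t(N+1)\oplus s(N+1)$ are exactly $\ell(N)\bmod 2$ and $\ell(N+1)\bmod 2$. Two bits are unequal iff the corresponding lengths are incongruent modulo $2$, which is the claimed equivalence.

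The only step requiring care is the boundary case $N=0$, where the convention $B_0(0)=1$ must be consistent with the choice of $\ell(0)$. I would state $\ell(0)=1$ explicitly to make this consistency hold.

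Anticipating the next step the paper announces, I would also record the following. Since $\ell(N+1)-\ell(N)\in\{0,1\}$, with value $1$ exactly when $N+1$ is a power of two (equivalently, when $N=2^k-1$ is a Mersenne number with $k\geq1$), the xor changes exactly at Mersenne numbers. The one exception is $N=0$: there $\ell(0)=\ell(1)=1$ under our convention, so no change occurs, and $N=0$ should be excluded from that later remark.
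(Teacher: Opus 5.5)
Your proof is correct and follows essentially the same route as the paper: you derive $t(N)\oplus s(N)\equiv \ell(N)\pmod 2$ from $B_0(N)+B_1(N)=\ell(N)$ and the fact that xor on bits is addition modulo $2$, then compare the resulting bits at $N$ and $N+1$. The paper takes $\ell(N)=B_1(N)+B_0(N)$ as the definition of $\ell$, which builds in your convention $\ell(0)=1$ automatically.
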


\begin{proof}

Let $t(N)=B_1(N)\pmod 2$ and $s(N)=B_0(N)\pmod 2$, where $B_1(N)$ denotes the number of $1$'s bits and $B_0(N)$ denotes the number of $0$'s bits in the standard binary representation of $N$. Let $\ell(N)=B_1(N)+B_0(N)$ denote the length of the standard binary representation of $N$. 

First note that both $t(N)$ and $s(N)$ outputs either a $0$ or $1$. Then for any two bits $a,b\in\{0,1\}$, the exclusive or operation satisfies $a\oplus b\equiv a+b\pmod 2$. To see this, observe the four possible cases as follows: If $a=b=0$, then $a\oplus b=0$ and $a+b=0$. If exactly one of $a,b$ is $1$, then $a\oplus b=1$ and $a+b=1$. If $a=b=1$, then $a\oplus b=0$ and $a+b=2\equiv 0\pmod 2$. Thus, for bits, exclusive-or is addition modulo $2$.

Applying this gives $t(N)\oplus s(N)\equiv t(N)+s(N)\pmod 2$. We already know $B_1(N)+B_0(N)=\ell(N)$. Substituting this gives $t(N)\oplus s(N)\equiv \ell(N)\pmod 2$. Similarly for $N+1$, which gives $t(N+1)\oplus s(N+1)\equiv \ell(N+1)\pmod 2$.

Now let $c=t(N)\oplus s(N)$ and $d=t(N+1)\oplus s(N+1)$, where $c,d\in\{0,1\}$. We claim that $c\neq d$ if and only if $c\not\equiv d\pmod 2$. To see this, observe the four possible cases as follows: If $c=d=0$, then $c=d$ and $c\equiv d\pmod 2$. If $c=d=1$, then $c=d$ and $c\equiv d\pmod 2$. If $c=0$ and $d=1$, then $c\neq d$ and $c\not\equiv d\pmod 2$. If $c=1$ and $d=0$, then $c\neq d$ and $c\not\equiv d\pmod 2$. Thus, for bits, being different is exactly the same as having different residues modulo $2$.

Therefore, for every $N\geq 0$, $(t(N)\oplus s(N))\neq(t(N+1)\oplus s(N+1))$ if and only if $t(N)\oplus s(N)\not\equiv t(N+1)\oplus s(N+1)\pmod 2$. Using $t(N)\oplus s(N)\equiv \ell(N)\pmod 2$ and $t(N+1)\oplus s(N+1)\equiv \ell(N+1)\pmod 2$, this is equivalent to $\ell(N)\not\equiv \ell(N+1)\pmod 2$.$\qed$
\end{proof}

\begin{corollary}
$N$ is a Mersenne number if and only if $\ell(N)\not\equiv \ell(N+1)\pmod 2$ for every $N\geq 0$
\end{corollary}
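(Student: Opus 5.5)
The plan is to reduce the parity condition to a statement about lengths. Since $\ell(N+1)$ and $\ell(N)$ differ by at most one, their parities differ exactly when the lengths differ, and the lengths differ exactly when adding $1$ carries into a new leading bit. Throughout I take ``Mersenne number'' as in the definition of ${\bf MNS}$: $N=2^t-1$ with $t\geq 1$. I also use the convention $\ell(0)=1$, since the standard binary representation of $0$ is $0$. Under these conventions $N=0$ is \emph{not} Mersenne, and indeed $\ell(0)=\ell(1)=1$, so the equivalence holds there. I will treat $N=0$ as a separate base case and assume $N\geq 1$ afterwards.

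\textbf{Step 1 (lengths grow by at most one).} For $N\geq 1$ we have $\ell(N)=\lfloor\log_2 N\rfloor+1$, so $\ell(N)=k$ iff $2^{k-1}\leq N\leq 2^k-1$. Since $N\leq N+1\leq 2^k$, it follows that $\ell(N+1)\in\{k,k+1\}$. Moreover, $\ell(N+1)=k+1$ iff $N+1=2^k$, that is, iff $N=2^k-1$.

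\textbf{Step 2 (different parity iff different lengths).} Two integers differing by $0$ or $1$ have different residues mod $2$ exactly when they differ by $1$. Combined with Step 1, this gives $\ell(N)\not\equiv\ell(N+1)\pmod 2$ iff $\ell(N+1)=\ell(N)+1$.

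\textbf{Step 3 (identify the case).} By Step 1, $\ell(N+1)=\ell(N)+1$ iff $N=2^k-1$ with $k=\ell(N)\geq 1$, which is exactly the Mersenne condition. Conversely, if $N=2^t-1$ with $t\geq 1$, then $N$ is written as $t$ ones and $N+1=2^t$ is a one followed by $t$ zeros. So $\ell(N)=t$ and $\ell(N+1)=t+1$, and the parities differ.

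\textbf{Main obstacle.} The only delicate point is the boundary case $N=0$, together with the convention $B_0(0)=1$ and whether $0=2^0-1$ counts as Mersenne. I will resolve this explicitly using $t\in\mathbb{N}_{>0}$ from the ${\bf MNS}$ definition, so that the statement is exactly right at $N=0$.

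Finally, I will note the consequence in terms of the previous theorem: $N$ is Mersenne iff $t(N)\oplus s(N)\neq t(N+1)\oplus s(N+1)$. This is the form the later containment argument for ${\bf MNS}$ will use.
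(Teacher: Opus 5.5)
Your proposal is correct and follows essentially the same route as the paper. Both treat $N=0$ separately via $\ell(0)=\ell(1)=1$, and for $N\geq 1$ both use $2^{r-1}\leq N\leq 2^r-1$ to show $\ell(N+1)=\ell(N)$ unless $N=2^r-1$, in which case $\ell(N+1)=\ell(N)+1$; your ``lengths grow by at most one'' step just packages the paper's two-case split slightly more compactly.
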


\begin{proof}
A Mersenne number is a number of the form $2^m-1$ for $m\geq 1$.

We first assume that $N=0$. The standard binary expansion of $0$ is $0$, so $\ell(0)=1$ and $\ell(1)=1$ then $\ell(0)\equiv \ell(1)\pmod 2$. As a result the condition that $\ell(N)\not\equiv \ell(N+1)\pmod 2$ does not hold when $N=0$. This agrees with our convention that $0$ is not a Mersenne number.

We now assume that $N\geq 1$. Let $r=\ell(N)$. Then the standard binary expansion of $N$ has length $r$, so $2^{r-1}\leq N\leq 2^r-1$. 

First assume $N<2^r-1$. Since $N$ is an integer, this means that $N\leq 2^r-2$ and $N+1\leq 2^r-1$. Also, since $N\geq 2^{r-1}$, we have $N+1\geq 2^{r-1}+1>2^{r-1}$. Thus $2^{r-1}\leq N+1\leq 2^r-1$ and $\ell(N+1)=r$. Thus $\ell(N+1)=\ell(N)$, and as a result $\ell(N)\equiv \ell(N+1)\pmod 2$.

Next assume $N=2^r-1$ and $N+1=2^r$. The standard binary expansion of $2^r$ is $1$ followed by $r$ many $0$'s, so $\ell(N+1)=r+1$. Since $\ell(N)=r$, we get $\ell(N)\not\equiv \ell(N+1)\pmod 2$.

Therefore, for every $N\geq 1$, $\ell(N)\not\equiv \ell(N+1)\pmod 2$ holds if and only if $N=2^r-1$, a Mersenne number, where $r=\ell(N)\geq 1$. Equivalently, this holds if and only if $N$ is of the form $2^m-1$ for some $m\geq 1$. Together with the $N=0$ case, this proves that, for every $N\geq 0$, $N$ is a Mersenne number if and only if $\ell(N)\not\equiv \ell(N+1)\pmod 2$.$\qed$
\end{proof}

\begin{theorem}
${\bf C_{=}P}={\bf ES}={\bf MNS}\subseteq {\bf P}^{\bf B_{|0|\oplus}P}={\bf P}^{\bf B_{|1|\oplus}P}$
\end{theorem}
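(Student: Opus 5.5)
Since ${\bf C_=P}={\bf ES}={\bf MNS}$ and ${\bf P}^{\bf B_{|0|\oplus}P}={\bf P}^{\bf B_{|1|\oplus}P}$ are already known, it suffices to show ${\bf MNS}\subseteq{\bf P}^{\bf B_{|1|\oplus}P}$; the result for the other oracle then follows from the stated equality. So fix $L\in{\bf MNS}$ with polynomial $p$ and predicate $R$, and let $N(x)=||\{y \mid |y|=p(|x|)\wedge R(x,y)\}||$. By definition, $x\in L$ iff $N(x)=2^t-1$ for some $t\geq 1$. By the Corollary, this holds iff $\ell(N(x))\not\equiv\ell(N(x)+1)\pmod 2$. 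By the preceding Theorem, that in turn holds iff $t(N(x))\oplus s(N(x))\neq t(N(x)+1)\oplus s(N(x)+1)$. The plan is therefore to compute the four bits $t(N),s(N),t(N+1),s(N+1)$ using a constant number of oracle queries, and then accept iff the two xors differ.

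The bits $t(N(x))$ and $s(N(x))$ come directly from two oracles. The language $A_1=\{x : B_1(N(x))\text{ odd}\}$ is in ${\bf B_{|1|\oplus}P}$ by definition, witnessed by the same $p$ and $R$. Similarly $A_0=\{x : B_0(N(x))\text{ odd}\}$ is in ${\bf B_{|0|\oplus}P}$. For $N(x)+1$, I build a modified predicate. Set $p'(n)=p(n)+1$ and define
\[R'(x,y'b)\Leftrightarrow (b=0\wedge R(x,y'))\vee(b=1\wedge y'=0^{p(|x|)}).\]
This predicate has exactly $N(x)+1$ witnesses of length $p'(|x|)$ and is polynomial-time. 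Hence $A_1'=\{x: B_1(N(x)+1)\text{ odd}\}\in{\bf B_{|1|\oplus}P}$ and $A_0'=\{x:B_0(N(x)+1)\text{ odd}\}\in{\bf B_{|0|\oplus}P}$.

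The deterministic machine queries $A_1,A_1'$ to the ${\bf B_{|1|\oplus}P}$ oracle. It obtains $A_0,A_0'$ using ${\bf B_{|0|\oplus}P}\subseteq{\bf P}^{\bf B_{|1|\oplus}P}$: each such query is simulated in polynomial time with further ${\bf B_{|1|\oplus}P}$ queries, and a polynomial-time machine composed with polynomially many polynomial-time subroutines remains polynomial-time. It then accepts iff $(t(N)\oplus s(N))\neq(t(N+1)\oplus s(N+1))$. By the chain of equivalences above, this decides $L$, so $L\in{\bf P}^{\bf B_{|1|\oplus}P}={\bf P}^{\bf B_{|0|\oplus}P}$.

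The main point needing care is the uniformity of witness length in the $N+1$ padding construction: the extra witness $0^{p}1$ must be distinct from all others, which the trailing bit guarantees. The edge case $N=0$ must also be handled. There the convention $B_0(0)=1$ gives $s(0)=1$, and this matches the conventions behind the Theorem and Corollary, so no special treatment is needed. Alternatively, one could avoid mixing oracles by recovering $\ell(N)$ parity purely from $B_1$-parities, but going through the known equality ${\bf P}^{\bf B_{|0|\oplus}P}={\bf P}^{\bf B_{|1|\oplus}P}$ is the most direct route.
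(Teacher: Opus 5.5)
Your proof is correct and is essentially the paper's argument: obtain $t$ and $s$ at $N(x)$ and $N(x)+1$ from parity oracles, use the first Theorem and the Corollary to turn ``the two XORs differ'' into a Mersenne test, and use the known relationship ${\bf B_{|0|\oplus}P}\subseteq{\bf P}^{\bf B_{|1|\oplus}P}$ (the paper cites the equality ${\bf P}^{\bf B_{|0|\oplus}P}={\bf P}^{\bf B_{|1|\oplus}P}$) to reduce to a single oracle type. The differences are minor. You build the $N+1$ witness predicate explicitly, where the paper just cites closure of ${\bf \#P}$ under adding constants, and you query several separate languages ($A_1$, $A_1'$, and the one simulating $A_0,A_0'$); these still need to be merged into one ${\bf B_{|1|\oplus}P}$ language by a tagged join, as the paper does by tagging with $j$, padding witnesses with forced zeros to a common length, which is routine.
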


\begin{proof}
We use the established equality ${\bf C_{=}P}={\bf ES}={\bf MNS}$. Therefore, it is enough to prove that ${\bf MNS}\subseteq {\bf P}^{\bf B_{|0|\oplus}P}={\bf P}^{\bf B_{|1|\oplus}P}$.

Let $L\in{\bf MNS}$. Then there is a ${\bf \#P}$ function $F(x)$ such that, for every input $x$, $x\in L$ if and only if $F(x)$ is a Mersenne number. We use the convention that a Mersenne number is a number of the form $2^m-1$ for some $m\geq 1$, so $0$ is not a Mersenne number. Define $F_0(x)=F(x)$ and $F_1(x)=F(x)+1$. Both $F_0$ and $F_1$ are ${\bf \#P}$ functions since ${\bf \# P}$ is closed under addition of constants.

The two functions $F_0$ and $F_1$ can be combined into a single oracle language $O_t\in{\bf B_{|1|\oplus}P}$ by tagging the query with $j\in\{0,1\}$. The oracle says yes on $(x,j)$ exactly when $B_1(F_j(x))\not\equiv 0\pmod 2$. Thus the oracle answer on $(x,j)$ is precisely $t(F(x)+j)$. Similarly, the two functions $F_0$ and $F_1$ can be combined into a single oracle language $O_s\in{\bf B_{|0|\oplus}P}$ by tagging the query with $j\in\{0,1\}$. The oracle says yes on $(x,j)$ exactly when $B_0(F_j(x))\not\equiv 0\pmod 2$. Thus the oracle answer on $(x,j)$ is precisely $s(F(x)+j)$.

A deterministic polynomial time oracle machine obtains the four bits $t(F(x))$, $s(F(x))$, $t(F(x)+1)$, and $s(F(x)+1)$. It then computes $t(F(x))\oplus s(F(x))$ and $t(F(x)+1)\oplus s(F(x)+1)$. The machine accepts if these two bits are different, and rejects otherwise. By the first theorem, the two bits are different if and only if $\ell(F(x))\not\equiv \ell(F(x)+1)\pmod 2$. By the corollary, this happens if and only if $F(x)$ is a Mersenne number. Since $x\in L$ if and only if $F(x)$ is a Mersenne number, the oracle machine decides $L$.

The computation above uses both a ${\bf B_{|1|\oplus}P}$ oracle and a ${\bf B_{|0|\oplus}P}$ oracle. Since ${\bf P}^{\bf B_{|0|\oplus}P}={\bf P}^{\bf B_{|1|\oplus}P}$, queries to either type of oracle can be simulated inside deterministic polynomial time using the other type. Therefore, the same language is decidable using only a ${\bf B_{|0|\oplus}P}$ oracle, and also using only a ${\bf B_{|1|\oplus}P}$ oracle. Thus, ${\bf MNS}\subseteq {\bf P}^{\bf B_{|0|\oplus}P}$ and ${\bf MNS}\subseteq {\bf P}^{\bf B_{|1|\oplus}P}$. Also, recall that ${\bf C_{=}P}={\bf ES}={\bf MNS}$ has already been established. Therefore, then we can conclude that ${\bf C_{=}P}={\bf ES}={\bf MNS}\subseteq {\bf P}^{\bf B_{|0|\oplus}P}={\bf P}^{\bf B_{|1|\oplus}P}$.$\qed$
\end{proof}

\subsection{Containment of ${\bf PP}$ }

It was shown in \cite{P26b} that the parity of a ${\bf \#P}$ value can be recovered using polynomial time access to either ${\bf B_{|0|\oplus}P}$ or ${\bf B_{|1|\oplus}P}$ oracles. This result was used in proving ${\bf \oplus P}\subseteq {\bf P}^{{\bf B_{|0|\oplus}P}}$ and ${\bf \oplus P}\subseteq {\bf P}^{{\bf B_{|1|\oplus}P}}$. We now recover the bits of a ${\bf \#P}$ value one bit at a time by using the same four query parity subroutine, where on each query the value is padded and shifted. This upgrades the parity test to a threshold test, and gives the containment of ${\bf PP}$ in ${\bf P}^{\bf B_{|0|\oplus}P}$ and ${\bf P}^{\bf B_{|1|\oplus}P}$. Consequently, we get ${\bf P}^{\bf PP}={\bf P}^{\bf B_{|0|\oplus}P}={\bf P}^{\bf B_{|1|\oplus}P}$.

\begin{theorem}
${\bf PP}\subseteq{\bf P}^{\bf B_{|0|\oplus}P}$
\end{theorem}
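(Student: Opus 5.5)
The plan is to show that, given $L\in{\bf PP}$ with witness function $F(x)=||\{y \mid |y|=p(|x|)\wedge R(x,y)\}||$, a deterministic polynomial time machine with a ${\bf B_{|0|\oplus}P}$ oracle can recover every bit of $F(x)$. It then compares the recovered value with $2^{p(|x|)-1}$. Since $0\le F(x)\le 2^{p(|x|)}$, the value has at most $p(|x|)+1$ bits, so polynomially many bit recoveries suffice.

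\textbf{Key step: extracting a bit through a parity query.} For $i\ge 0$, bit $i$ of $F(x)$ equals the parity of $\lfloor F(x)/2^i\rfloor$. The floor is not obviously a ${\bf \#P}$ function, so I avoid it. Instead I use the four consecutive value theorem for $s$ from \cite{P26b}: for any $N\ge 0$, the four bits $s(N),s(N+1),s(N+2),s(N+3)$ determine the parity of $N$.

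\textbf{Recovering bits from low to high.} Suppose the low bits $b_0,\dots,b_{i-1}$ of $F(x)$ are already known, and write $r_i=\sum_{k<i}b_k2^k$. Define
\[G_i(x)=\frac{F(x)-r_i}{2^i}.\]
To realise $G_i$ as a ${\bf \#P}$ function, first subtract $r_i$. This is a closure question, but it is avoidable: the machine can instead use $F(x)+(2^i-r_i)$, which is divisible by $2^i$ and is a ${\bf \#P}$ function, since ${\bf \#P}$ is closed under adding polynomial time computable constants. The quotient $(F(x)+2^i-r_i)/2^i$ equals $G_i(x)+1$, whose parity is complementary to bit $i$. Division by $2^i$ is handled by the bit-counting structure rather than by ${\bf \#P}$ closure.
\begin{itemize}
\item $B_0(M\cdot 2^i)=B_0(M)+i$ for $M\ge 1$, so $s(M2^i)=s(M)\oplus(i \bmod 2)$.
\item Hence querying $s\big((M+j)2^i\big)$ for $j=0,1,2,3$ with $M=G_i(x)+1$ yields $s(M),\dots,s(M+3)$ after correcting by the known parity of $i$.
\item Each value $(M+j)2^i=F(x)+2^i-r_i+j2^i$ is a ${\bf \#P}$ value plus a polynomial time computable constant, which is the ``padding and shifting'' described in the text.
\item All these functions combine into one oracle language in ${\bf B_{|0|\oplus}P}$ by tagging queries with $(x,i,r_i,j)$.
\end{itemize}
The four consecutive value theorem then gives the parity of $M$, hence $b_i$.

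\textbf{Main obstacle and conclusion.} The step I expect to need the most care is the edge case $M=0$, since $B_0(0)=1$ breaks the shift identity. By construction $M\ge 1$, because $F(x)\ge r_i$. I would still verify the identity $s(M2^i)=s(M)\oplus(i\bmod 2)$ and the nonnegativity of all shifted values. Iterating for $i=0,\dots,p(|x|)$ recovers $F(x)$ exactly with $4(p(|x|)+1)$ queries. The machine then accepts iff $F(x)>2^{p(|x|)-1}$, so $L\in{\bf P}^{\bf B_{|0|\oplus}P}$. The same construction with $t$ and $B_1$, for which the shift identity is simply $t(M2^i)=t(M)$, gives the ${\bf B_{|1|\oplus}P}$ version and the ${\bf \#P}\subseteq{\bf FP}^{\bf B_{|0|\oplus}P}$ statement.
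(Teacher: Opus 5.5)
Your proof is correct and follows essentially the same route as the paper. It recovers the bits of the ${\bf \#P}$ value from low to high, makes four shifted ${\bf B_{|0|\oplus}P}$ queries per bit, and applies the four consecutive value theorem for $s$, with a shift chosen so the relevant quotient is at least $1$ and the low-order contribution to $B_0$ is known.

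The only difference is in the shift mechanics. You clear the known low bits by adding $2^i-r_i$, so the correction is just $i \bmod 2$, but each query then depends on earlier answers. The paper instead pads with $2^n$ on top and queries the fixed values $2^n+A(x)+j2^r$, correcting by the zero count $z_r$ of the known low bits. This makes its queries non-adaptive, which its later count of $2n+2$ distinct queries relies on.
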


\begin{proof}
Let $L\in{\bf PP}$. Then there is a non-deterministic polynomial time machine $M$ and a polynomial $p$ such that, on every input $x$, the machine $M(x)$ uses exactly $p(|x|)$ non-deterministic bits, and $x\in L$ if and only if $M(x)$ has more accepting paths than rejecting paths. Let $A(x)=\#\operatorname{acc}_M(x)$. Since $M(x)$ has exactly $2^{p(|x|)}$ computation paths, we have $x\in L$ if and only if $A(x)>2^{p(|x|)-1}$. We may assume without loss of generality that $p(|x|)\geq 1$ by adding one non-deterministic bit if necessary.

Set $p=p(|x|)$ and $n=p+1$. Since $0\leq A(x)\leq 2^p<2^n$, there are uniquely determined bits $a_0,a_1,\ldots,a_{n-1}\in\{0,1\}$ such that $A(x)=\sum_{i=0}^{n-1}a_i2^i$. Define $X(x)=2^n+A(x)$. Then the standard binary representation of $X(x)$ has the form $(1a_{n-1}a_{n-2}\cdots a_1a_0)_2$.

We next show that a deterministic polynomial time machine with oracle access to a language in ${\bf B_{|0|\oplus}P}$ can recover the bits $a_0,a_1,\ldots,a_{n-1}$ one at a time. Assume that the first $r$ low order bits $a_0,\ldots,a_{r-1}$ have already been determined. We then write $X(x)=2^rQ_r(x)+b_r$, where $b_r=X(x)\bmod 2^r$. And the value $b_r$ is already known from the recovered low order bits. The next desired bit is $a_r=Q_r(x)\bmod 2$. For $j\in\{0,1,2,3\}$, consider the function $X(x)+j2^r$. Each such function is in fact a ${\bf\#P}$ function, because $X(x)=2^n+A(x)$ is a ${\bf \#P}$ function and ${\bf \#P}$ is closed under addition with length dependent input constants. We now combine all of these queries into one tagged oracle language $O\in{\bf B_{|0|\oplus}P}$, where the tagged query $(x,r,j)$ asks whether $B_0(X(x)+j2^r)$ is odd. We now have $X(x)+j2^r=2^r(Q_r(x)+j)+b_r$. Since $Q_r(x)+j>0$, the binary representation of $X(x)+j2^r$ consists of the binary representation of $Q_r(x)+j$ followed by its lower $r$ bit positions, whose value is $b_r$. Let $z_r$ be the number of $0$'s bits among these lower $r$ positions. Since $b_r$ is known then $z_r$ is also known. As a result, $B_0(X(x)+j2^r)\equiv B_0(Q_r(x)+j)+z_r\pmod 2$. Thus, the deterministic machine can compute $s(Q_r(x)+j)=B_0(Q_r(x)+j)\pmod 2$ from the oracle answer for $X(x)+j2^r$.

So the four oracle queries for $j$ equal to $0,1,2,3$ give the four values $s(Q_r(x))$, $s(Q_r(x)+1)$, $s(Q_r(x)+2)$, and $s(Q_r(x)+3)$. By the four consecutive value theorem for $s$, these four values determine whether $Q_r(x)$ is even or odd. \cite{P26b} Thus, they determine $a_r=Q_r(x)\bmod 2$. Repeating this procedure for $r=0,1,\ldots,n-1$ recovers all $n$ bits of $A(x)$. This uses $4n$ oracle query calls, and since $n=p(|x|)+1$ is polynomially bounded, this is polynomially many queries. Once $A(x)$ is recovered, the deterministic machine checks whether $A(x)>2^{p(|x|)-1}$ and accepts exactly in that case.

Therefore, $L$ is decidable by a deterministic polynomial time machine with oracle access to a language in ${\bf B_{|0|\oplus}P}$. Since $L\in{\bf PP}$ was for any $L$, we can then conclude that ${\bf PP}\subseteq{\bf P}^{\bf B_{|0|\oplus}P}$.$\qed$
\end{proof}

\begin{theorem}
${\bf PP}\subseteq{\bf P}^{\bf B_{|1|\oplus}P}$
\end{theorem}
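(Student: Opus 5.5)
There are two possible routes. One is to invoke the previous theorem, ${\bf PP}\subseteq{\bf P}^{\bf B_{|0|\oplus}P}$, together with the known equality ${\bf P}^{\bf B_{|0|\oplus}P}={\bf P}^{\bf B_{|1|\oplus}P}$ from \cite{P26a}. That gives the statement immediately. I will instead give a direct proof that mirrors the previous one with $t$ in place of $s$. This keeps the bit recovery procedure self-contained for the $B_1$ oracle, and the direct version is also the one needed later for the functional containment ${\bf \#P}\subseteq{\bf FP}^{\bf B_{|1|\oplus}P}$.

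\textbf{Setup.} Let $L\in{\bf PP}$ with machine $M$, polynomial $p$, and $A(x)=\#\operatorname{acc}_M(x)$, so that $x\in L$ iff $A(x)>2^{p-1}$. Set $n=p+1$ and pad to $X(x)=2^n+A(x)$. Note that $X(x)$ is a ${\bf \#P}$ function and its leading bit is fixed at position $n$.

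\textbf{Bit recovery.} Suppose the low bits $a_0,\ldots,a_{r-1}$ are already known. Write $X(x)=2^rQ_r(x)+b_r$, where $b_r$ is known. Define a single tagged oracle $O\in{\bf B_{|1|\oplus}P}$ that, on query $(x,r,j)$ with $j\in\{0,1,2,3\}$, answers whether $B_1(X(x)+j2^r)$ is odd. The function $X(x)+j2^r$ is in ${\bf \#P}$, by adding $j2^r$ dummy accepting paths on a padded branch. Since $j2^r$ only touches positions $\geq r$, we have
\[X(x)+j2^r=2^r(Q_r(x)+j)+b_r .\]
The binary expansion is therefore that of $Q_r(x)+j$ followed by exactly $r$ low positions spelling $b_r$. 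Let $w_r$ be the number of ones in $b_r$, which is known. Then $B_1(X(x)+j2^r)\equiv B_1(Q_r(x)+j)+w_r \pmod 2$, so each oracle answer yields $t(Q_r(x)+j)$.

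The four values $t(Q_r),\ldots,t(Q_r+3)$ determine the parity of $Q_r(x)$ by the four consecutive value theorem for $t$ from \cite{P26b}: $N$ is even iff $t(N)\neq t(N+1)$ and $t(N+2)\neq t(N+3)$. This parity is $a_r$. Iterating for $r=0,\ldots,n-1$ uses $4n$ queries and recovers $A(x)$, and the machine then compares $A(x)$ with $2^{p-1}$.

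\textbf{Main obstacle.} The only delicate point is making sure the lower $r$ positions really contribute exactly the known quantity, i.e.\ that no carry crosses position $r$ and that there are no leading-zero issues. For $B_1$ this is easier than for $B_0$, because leading zeros do not affect the count of ones. So the only thing to check is that adding $j2^r$ leaves $b_r$ untouched, which holds since $b_r<2^r$. If anything went wrong, the fallback is the one-line argument via ${\bf P}^{\bf B_{|0|\oplus}P}={\bf P}^{\bf B_{|1|\oplus}P}$.
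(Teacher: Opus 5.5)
Your proposal is correct and follows the paper's route. Like the paper, you pad to $X(x)=2^n+A(x)$ and query one tagged ${\bf B_{|1|\oplus}P}$ oracle on $X(x)+j2^r$ for $j\in\{0,1,2,3\}$. You remove the known parity of the ones in $b_r$ to get $t(Q_r(x)+j)$, then apply the four consecutive value theorem for $t$ to read off $a_r$, using $4n$ queries in total. Your side remark is also valid: the statement follows at once from ${\bf PP}\subseteq{\bf P}^{{\bf B_{|0|\oplus}P}}$ together with ${\bf P}^{{\bf B_{|0|\oplus}P}}={\bf P}^{{\bf B_{|1|\oplus}P}}$.
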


\begin{proof}
Let $L\in{\bf PP}$. Then there is a non-deterministic polynomial time machine $M$ and a polynomial $p$ such that, on every input $x$, the machine $M(x)$ uses exactly $p(|x|)$ non-deterministic bits, and $x\in L$ if and only if $M(x)$ has more accepting paths than rejecting paths. Let $A(x)=\#\operatorname{acc}_M(x)$. Since $M(x)$ has exactly $2^{p(|x|)}$ computation paths, we have $x\in L$ if and only if $A(x)>2^{p(|x|)-1}$. We assume without loss of generality that $p(|x|)\geq 1$ by adding one non-deterministic bit if necessary. 

Set $p=p(|x|)$ and $n=p+1$. Since $0\leq A(x)\leq 2^p<2^n$, there are uniquely determined bits $a_0,a_1,\ldots,a_{n-1}\in\{0,1\}$ such that $A(x)=\sum_{i=0}^{n-1}a_i2^i$. Define $X(x)=2^n+A(x)$. Then the standard binary representation of $X(x)$ has the form $(1a_{n-1}a_{n-2}\cdots a_1a_0)_2$.

We next show that a deterministic polynomial time machine with oracle access to a language in ${\bf B_{|1|\oplus}P}$ can recover the bits $a_0,a_1,\ldots,a_{n-1}$ one at a time. Assume that the first $r$ low order bits $a_0,\ldots,a_{r-1}$ have already been determined. We then write $X(x)=2^rQ_r(x)+b_r$, where $b_r=X(x)\bmod 2^r$. And the value $b_r$ is already known from the recovered low order bits. The next desired bit is $a_r=Q_r(x)\bmod 2$. For $j\in\{0,1,2,3\}$, consider the function $X(x)+j2^r$. Each such function is a ${\bf \#P}$ function, because $X(x)=2^n+A(x)$ is a ${\bf \#P}$ function and ${\bf \#P}$ is closed under addition with length dependent input constants. We now combine all of these queries into one tagged oracle language $O\in{\bf B_{|1|\oplus}P}$, where the tagged query $(x,r,j)$ asks whether $B_1(X(x)+j2^r)$ is odd. We now have $X(x)+j2^r=2^r(Q_r(x)+j)+b_r$. Since $Q_r(x)+j>0$, the binary representation of $X(x)+j2^r$ consists of the binary representation of $Q_r(x)+j$ followed by its lower r bit positions whose value is $b_r$. Let $u_r$ be the number of $1$'s bits among these lower $r$ positions. Since $b_r$ is known then $u_r$ is also known. As a result, $B_1(X(x)+j2^r)\equiv B_1(Q_r(x)+j)+u_r\pmod 2$. Thus, the deterministic machine can compute $t(Q_r(x)+j)=B_1(Q_r(x)+j)\pmod 2$ from the oracle answer for $X(x)+j2^r$. 

So the four oracle queries for $j$ equal to $0,1,2,3$ give the four values $t(Q_r(x))$, $t(Q_r(x)+1)$, $t(Q_r(x)+2)$, and $t(Q_r(x)+3)$. By the four consecutive value theorem for $t$, these four values determine whether $Q_r(x)$ is even or odd.\cite{P26b} Thus they determine $a_r=Q_r(x)\bmod 2$. Repeating this procedure for $r=0,1,\ldots,n-1$ recovers all $n$ bits of $A(x)$. This uses $4n$ oracle query calls, and since $n=p(|x|)+1$ is polynomially bounded, this is polynomially many queries. Once $A(x)$ is recovered, the deterministic machine checks whether $A(x)>2^{p(|x|)-1}$ and  accepts exactly in that case.

Therefore $L$ is decidable by a deterministic polynomial time machine with oracle access to a language in ${\bf B_{|1|\oplus}P}$. Since $L\in{\bf PP}$ was for any $L$, we can then conclude that ${\bf PP}\subseteq{\bf P}^{\bf B_{|1|\oplus}P}$.$\qed$
\end{proof}
\newpage

\begin{theorem}
${\bf P}^{\bf PP}={\bf P}^{\bf B_{|0|\oplus}P}={\bf P}^{\bf B_{|1|\oplus}P}$
\end{theorem}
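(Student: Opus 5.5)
We prove the two inclusions ${\bf P}^{\bf PP}\subseteq{\bf P}^{\bf B_{|0|\oplus}P}$ and ${\bf P}^{\bf B_{|0|\oplus}P}\subseteq{\bf P}^{\bf PP}$. Then we transfer everything to ${\bf B_{|1|\oplus}P}$ via the known equality ${\bf P}^{\bf B_{|0|\oplus}P}={\bf P}^{\bf B_{|1|\oplus}P}$ from \cite{P26a}.

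For the inclusion ${\bf P}^{\bf B_{|0|\oplus}P}\subseteq{\bf P}^{\bf PP}$ we use the containment ${\bf B_{|0|\oplus}P}\subseteq{\bf P}^{\bf PP}$ proven in \cite{P26a}. Let $L$ be decided by a deterministic polynomial time machine $D$ with oracle $O\in{\bf B_{|0|\oplus}P}$. Each query of $D$ to $O$ is answered by running the ${\bf P}^{\bf PP}$ procedure for $O$, which uses a fixed oracle $O'\in{\bf PP}$. The query lengths are polynomial in $|x|$, and there are polynomially many queries. Hence the whole simulation is a deterministic polynomial time computation with access to $O'$, and $L\in{\bf P}^{\bf PP}$. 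This is just the standard fact ${\bf P}^{{\bf P}^{C}}={\bf P}^{C}$.

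For the inclusion ${\bf P}^{\bf PP}\subseteq{\bf P}^{\bf B_{|0|\oplus}P}$ we use the preceding theorem ${\bf PP}\subseteq{\bf P}^{\bf B_{|0|\oplus}P}$. Let $L$ be decided by a deterministic polynomial time machine with oracle $K\in{\bf PP}$. By the preceding theorem, $K$ is decided by some deterministic polynomial time machine $E$ with oracle $O\in{\bf B_{|0|\oplus}P}$. We replace each query $q$ to $K$ by a run of $E^{O}(q)$. Because $|q|$ is polynomial in $|x|$, each run takes polynomial time and asks polynomially many queries to the single fixed language $O$. The composed machine is therefore in ${\bf P}^{\bf B_{|0|\oplus}P}$. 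The same argument with the theorem ${\bf PP}\subseteq{\bf P}^{\bf B_{|1|\oplus}P}$ gives ${\bf P}^{\bf PP}\subseteq{\bf P}^{\bf B_{|1|\oplus}P}$ directly. For the reverse direction in the $|1|$ case we use ${\bf B_{|1|\oplus}P}\subseteq{\bf P}^{\bf PP}$ from \cite{P26a}, or simply the equality ${\bf P}^{\bf B_{|0|\oplus}P}={\bf P}^{\bf B_{|1|\oplus}P}$.

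The only point needing care is that the oracle stays a single fixed language when we compose. In the preceding proof, the oracle is the tagged language $O$ defined from $M$ (queries $(x,r,j)$), and it depends only on $K$, not on the input. The polynomial bounds on query length also compose polynomially. With these two observations checked, all three classes coincide.
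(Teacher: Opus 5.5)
Your proposal is correct and follows essentially the same route as the paper. Both arguments obtain the two inclusions from ${\bf PP}\subseteq{\bf P}^{\bf B_{|0|\oplus}P}$ and ${\bf B_{|0|\oplus}P}\subseteq{\bf P}^{\bf PP}$ together with the collapse of nested deterministic oracle computations, and then transfer to ${\bf B_{|1|\oplus}P}$ via the known equality ${\bf P}^{\bf B_{|0|\oplus}P}={\bf P}^{\bf B_{|1|\oplus}P}$; your explicit check that the composed oracle stays a single fixed language is a detail the paper leaves implicit.
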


\begin{proof}
We already proved that ${\bf PP}\subseteq{\bf P}^{\bf B_{|0|\oplus}P}$. Therefore, a deterministic polynomial time machine with oracle access to a ${\bf PP}$ language can simulate each ${\bf PP}$ oracle query by a deterministic polynomial time computation with oracle access to a language in ${\bf B_{|0|\oplus}P}$. We then obtain ${\bf P}^{\bf PP}\subseteq{\bf P}^{{\bf P}^{\bf B_{|0|\oplus}P}}={\bf P}^{\bf B_{|0|\oplus}P}$ since nested deterministic polynomial time oracle computations collapse. 

For the reverse containment, we already know that ${\bf B_{|0|\oplus}P}\subseteq{\bf P}^{\bf PP}$. Therefore, a deterministic polynomial time machine with oracle access to a language in ${\bf B_{|0|\oplus}P}$ can simulate each such oracle query by a deterministic polynomial time computation with oracle access to ${\bf PP}$. We then obtain ${\bf P}^{\bf B_{|0|\oplus}P}\subseteq{\bf P}^{{\bf P}^{\bf PP}}={\bf P}^{\bf PP}$ since nested deterministic polynomial time oracle computations collapse. 

As a result, we obtain ${\bf P}^{\bf PP}={\bf P}^{\bf B_{|0|\oplus}P}$. Proof of ${\bf P}^{\bf PP}={\bf P}^{\bf B_{|1|\oplus}P}$ is identical, but we already know that ${\bf P}^{\bf B_{|0|\oplus}P}={\bf P}^{\bf B_{|1|\oplus}P}$.\cite{P26a}

Therefore, we can conclude that ${\bf P}^{\bf PP}={\bf P}^{\bf B_{|0|\oplus}P}={\bf P}^{\bf B_{|1|\oplus}P}$.$\qed$
\end{proof}

\subsection{Containment of \# P}
The first two theorems in the previous subsection actually recovered all of the bits of a ${\bf \#P}$ value. Therefore, we can indeed obtain the ${\bf \#P}$ value if we change the base machine from ${\bf P}$ to ${\bf FP}$. This is what we accomplish with the following two theorems and prove that ${\bf \#P}\subseteq{\bf FP}^{\bf B_{|0|\oplus}P}$ and ${\bf \#P}\subseteq{\bf FP}^{\bf B_{|1|\oplus}P}$. 

\begin{theorem}
${\bf \#P}\subseteq{\bf FP}^{\bf B_{|0|\oplus}P}$
\end{theorem}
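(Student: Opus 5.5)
The plan is to reuse the bit recovery procedure from the proof of ${\bf PP}\subseteq{\bf P}^{\bf B_{|0|\oplus}P}$ verbatim, but with an ${\bf FP}$ base machine that outputs the recovered number. Let $f\in{\bf \#P}$, witnessed by a non-deterministic polynomial time machine $M$ with $f(x)=\#\operatorname{acc}_M(x)$. We normalize $M$ so that every path on input $x$ uses exactly $p(|x|)$ non-deterministic bits, with $p(|x|)\geq 1$. Padding with dummy guesses and rejecting the extra branches preserves the number of accepting paths. Then $0\leq f(x)\leq 2^{p}<2^{n}$, where $p=p(|x|)$ and $n=p+1$. Define $X(x)=2^n+f(x)$, which is a ${\bf \#P}$ function whose binary representation is $(1a_{n-1}\cdots a_0)_2$, where the $a_i$ are the bits of $f(x)$.

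Next define the single tagged oracle language $O\in{\bf B_{|0|\oplus}P}$ that accepts $(x,r,j)$, for $0\leq r<n$ and $j\in\{0,1,2,3\}$, exactly when $B_0(X(x)+j2^r)$ is odd. Membership in ${\bf B_{|0|\oplus}P}$ follows because $(x,r,j)\mapsto X(x)+j2^r$ is a ${\bf \#P}$ function of the tagged input. One builds a machine that reads the tag, runs $M$, and adds $2^n+j2^r$ extra accepting paths by padding. We also make sure the predicate uses a fixed polynomial number of witness bits depending only on the tagged input length.

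The ${\bf FP}$ machine then iterates over $r=0,1,\ldots,n-1$, keeping the known low-order value $b_r=\sum_{i<r}a_i2^i$. At each stage it asks the four queries $(x,r,j)$ and subtracts the known number $z_r$ of zero bits among the low $r$ positions of $b_r$, counting leading zeros in those positions. This yields $s(Q_r(x)+j)$ for $j=0,1,2,3$, where $Q_r(x)=\lfloor X(x)/2^r\rfloor\geq 1$. The four consecutive value theorem for $s$ from \cite{P26b} then gives the parity of $Q_r(x)$, which is $a_r$. After $n$ rounds the machine outputs $\sum_{i<n}a_i2^i=f(x)$. This uses $4n$ queries and polynomial time.

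The only point needing care is the correctness of the decomposition $B_0(X(x)+j2^r)\equiv B_0(Q_r(x)+j)+z_r\pmod 2$. It requires $Q_r(x)+j>0$, so that the high part contributes exactly the standard representation of $Q_r(x)+j$ with no leading-zero anomaly, including the convention $B_0(0)=1$. The leading $1$ at position $n$ in $X(x)$ guarantees $Q_r(x)\geq 1$ for all $r\leq n-1$, which is exactly why the padding $2^n$ was introduced. I expect no real obstacle: the argument is essentially already contained in the ${\bf PP}$ proof, and the only change is to output $f(x)$ instead of comparing it to $2^{p-1}$.
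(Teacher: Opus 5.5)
Your proposal is correct and follows essentially the same route as the paper: pad to $X(x)=2^n+f(x)$, query the tagged ${\bf B_{|0|\oplus}P}$ language on $(x,r,j)$ for $B_0(X(x)+j2^r)$, correct by the known zero count $z_r$ of the low $r$ positions, apply the four consecutive value theorem for $s$ to get $a_r$, and output $f(x)$ after $n$ rounds. You are slightly more explicit than the paper in two places, namely fixing a uniform witness length in terms of the tagged input length and counting the leading zeros among the low $r$ positions, and both of these are correct refinements.
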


\begin{proof}
Let $F\in{\bf \#P}$. Then there is a polynomial time predicate $R(x,y)$ and a polynomial $p$ such that, for every input $x$, $F(x)=||{y\in\{0,1\}^{p(|x|)}\mid R(x,y)}||$. Set $p=p(|x|)$ and $n=p+1$. Then $0\leq F(x)\leq 2^p<2^n$. As a result, there are uniquely determined bits $a_0,a_1,\ldots,a_{n-1}\in\{0,1\}$ such that $F(x)=\sum_{i=0}^{n-1}a_i2^i$. We next define $X(x)=2^n+F(x)$, where the standard binary representation of $X(x)$ has the form
$(1a_{n-1}a_{n-2}\cdots a_1a_0)_2$.

Let $s(N)=B_0(N)\pmod 2$. We use the four consecutive value theorem for $s$, where for every $N\geq 0$, the parity of $N$ is determined by the four values $s(N)$, $s(N+1)$, $s(N+2)$, and $s(N+3)$. Specifically, $N$ is even if and only if $s(N)\neq s(N+1)$ and $s(N+2)\neq s(N+3)$.\cite{P26b}

We recover the bits $a_0,a_1,\ldots,a_{n-1}$ one at a time. Assume that the first $r$ low order bits $a_0,\ldots,a_{r-1}$ have already been recovered. We next write $X(x)=2^rQ_r(x)+b_r$, where $b_r=X(x)\bmod 2^r$. Since the first $r$ low order bits of $X(x)$ are already known, $b_r$ is known. The next desired bit is $a_r=Q_r(x)\bmod 2$. For each $j\in\{0,1,2,3\}$, consider the function $X(x)+j2^r$. This is a ${\bf \#P}$ function of the tagged input $(x,r,j)$, because $F(x)$ is a ${\bf \#P}$ function and the terms $2^n$ and $j2^r$ are nonnegative functions counted by polynomial time non-deterministic guessing. As a result, the tagged language $O_s=\{(x,r,j)\mid B_0(X(x)+j2^r)\not\equiv 0\pmod 2\}$ belongs to ${\bf B_{|0|\oplus}P}$. We now have $X(x)+j2^r=2^r(Q_r(x)+j)+b_r$. Since $Q_r(x)+j>0$, the standard binary representation of $X(x)+j2^r$ consists of the binary representation of $Q_r(x)+j$ followed by the lower $r$ bit positions whose numerical value is $b_r$. Let $z_r$ be the number of $0$'s bits among these lower $r$ bit positions. Since $b_r$ is known, $z_r$ is also known. As a result, $B_0(X(x)+j2^r)\equiv B_0(Q_r(x)+j)+z_r\pmod 2$. Thus the oracle answer for $(x,r,j)$, corrected by the known bit $z_r\pmod 2$, gives $s(Q_r(x)+j)$. Then making the four oracle queries for $j$ equal to $0,1,2,3$ gives $s(Q_r(x))$, $s(Q_r(x)+1)$, $s(Q_r(x)+2)$, and $s(Q_r(x)+3)$. These four values determine whether $Q_r(x)$ is even or odd by the four consecutive value theorem for $s$. Thus they determine $a_r=Q_r(x)\bmod 2$.

Repeating this procedure for $r=0,1,\ldots,n-1$ recovers all $n$ bits of $F(x)$. Since $n=p(|x|)+1$ is polynomially bounded, this is a deterministic polynomial time oracle computation. After the bits $a_0,a_1,\ldots,a_{n-1}$ are recovered, the machine computes the integer $F(x)=\sum_{i=0}^{n-1}a_i2^i$ and outputs its standard binary encoding. If all recovered bits are $0$, then the machine outputs the standard encoding of $0$. Therefore the value of the ${\bf \#P}$ function $F$ is computed by a deterministic polynomial time oracle transducer with access to oracle ${\bf B_{|0|\oplus}P}$. Since $F\in{\bf \#P}$ was for any $F$, we can then conclude that ${\bf \#P}\subseteq{\bf FP}^{\bf B_{|0|\oplus}P}$.$\qed$
\end{proof}

\begin{theorem}
${\bf \#P}\subseteq{\bf FP}^{\bf B_{|1|\oplus}P}$
\end{theorem}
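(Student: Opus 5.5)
The plan is to mirror the proof of the preceding theorem for ${\bf B_{|0|\oplus}P}$, replacing the $B_0$-parity sequence $s$ by the Prouhet-Thue-Morse sequence $t$. The same bit recovery procedure was already used in the proof of ${\bf PP}\subseteq{\bf P}^{\bf B_{|1|\oplus}P}$. So the only change from that proof is that the base machine is an ${\bf FP}$ transducer that outputs the recovered value instead of comparing it against a threshold.

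\textbf{Setup.} Let $F\in{\bf \#P}$ be given by a polynomial time predicate $R$ and polynomial $p$, so that $0\leq F(x)\leq 2^p<2^n$ with $p=p(|x|)$ and $n=p+1$. Write $F(x)=\sum_{i=0}^{n-1}a_i2^i$ and pad to $X(x)=2^n+F(x)$, whose binary representation is $(1a_{n-1}\cdots a_0)_2$. The padding guarantees that every shifted quotient is positive, so the binary representation splits cleanly.

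\textbf{Inductive step.} Suppose $a_0,\ldots,a_{r-1}$ are known. Write $X(x)=2^rQ_r(x)+b_r$ with $b_r$ known, and let $u_r$ be the number of $1$'s bits among the lower $r$ positions. Form the tagged language $O_t=\{(x,r,j)\mid B_1(X(x)+j2^r)\not\equiv 0\pmod 2\}$ for $j\in\{0,1,2,3\}$. It lies in ${\bf B_{|1|\oplus}P}$ because $X(x)+j2^r$ is a ${\bf \#P}$ function of the tagged input. To see this, a nondeterministic machine adds $2^n+j2^r$ extra accepting paths to those of $F$, padding all branches to a common polynomial length. Since $X(x)+j2^r=2^r(Q_r(x)+j)+b_r$ and $Q_r(x)+j>0$, we have $B_1(X(x)+j2^r)=B_1(Q_r(x)+j)+u_r$. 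Hence each oracle answer, XORed with $u_r\bmod 2$, yields $t(Q_r(x)+j)$. By the four consecutive value theorem for $t$ from \cite{P26b}, $Q_r(x)$ is even if and only if $t(Q_r(x))\neq t(Q_r(x)+1)$ and $t(Q_r(x)+2)\neq t(Q_r(x)+3)$. This gives $a_r=Q_r(x)\bmod 2$.

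\textbf{Output.} Iterating over $r=0,\ldots,n-1$ uses $4n$ queries and polynomial time overall. The transducer then outputs $\sum a_i2^i$ in standard binary, printing $0$ if all bits vanish.

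\textbf{Main obstacle.} The only point needing care is checking that $O_t$ is genuinely in ${\bf B_{|1|\oplus}P}$. This requires a single polynomial $p'$ on tagged inputs and a single predicate $R'$ whose count equals $X(x)+j2^r$ exactly. Two details matter. First, $r<n$ is encoded in the query, so $2^r$ is polynomially describable. Second, the branch encoding a padded path must be designed so that no spurious accepting paths are created. I would handle this with an explicit $R'$. On $y=(c,y')$, it accepts if $c=0$ and $R(x,y')$ holds, and for $c=1$ it accepts exactly when $y'$, read as an integer, is below $2^n+j2^r$. The latter bound fits because $2^n+3\cdot 2^{n-1}<2^{p+3}$, so $y'$ needs length $\max(p,\,p+3)$ bits with suitable zero-padding in the $R$ branch.
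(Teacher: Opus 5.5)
Your proposal is correct and takes essentially the same route as the paper's proof. Both pad to $X(x)=2^n+F(x)$, query the tagged ${\bf B_{|1|\oplus}P}$ language on $X(x)+j2^r$ for $j\in\{0,1,2,3\}$, correct each answer by $u_r \bmod 2$ to get $t(Q_r(x)+j)$, read off $a_r$ with the four consecutive value theorem for $t$, and output $\sum_i a_i2^i$; your explicit predicate $R'$ (zero-forced padding in the $R$ branch, plus the bound $2^n+3\cdot 2^{n-1}<2^{p+3}$) just spells out what the paper states as closure of ${\bf \#P}$ under adding such length-dependent constants.
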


\begin{proof}
Let $F\in{\bf \#P}$. Then there is a polynomial time predicate $R(x,y)$ and a polynomial $p$ such that, for every input $x$, $F(x)=||{y\in\{0,1\}^{p(|x|)}\mid R(x,y)}||$. Set $p=p(|x|)$ and $n=p+1$. Then $0\leq F(x)\leq 2^p<2^n$. As a result, there are uniquely determined bits $a_0,a_1,\ldots,a_{n-1}\in\{0,1\}$ such that $F(x)=\sum_{i=0}^{n-1}a_i2^i$. We next define $X(x)=2^n+F(x)$, where the standard binary representation of $X(x)$ has the form
$(1a_{n-1}a_{n-2}\cdots a_1a_0)_2$.

Let $t(N)=B_1(N)\pmod 2$. We use the four consecutive value theorem for $t$ where for every $N\geq 0$, the parity of $N$ is determined by the four values $t(N)$, $t(N+1)$, $t(N+2)$, and $t(N+3)$. Specifically, $N$ is even if and only if $t(N)\neq t(N+1)$ and $t(N+2)\neq t(N+3)$.\cite{P26b}

We recover the bits $a_0,a_1,\ldots,a_{n-1}$ one at a time. Assume that the first $r$ low order bits $a_0,\ldots,a_{r-1}$ have already been recovered. We next write $X(x)=2^rQ_r(x)+b_r$, where $b_r=X(x)\bmod 2^r$. Since the first $r$ low order bits of $X(x)$ are already known, $b_r$ is known. The next desired bit is $a_r=Q_r(x)\bmod 2$. For each $j\in\{0,1,2,3\}$, consider the function $X(x)+j2^r$. This is a ${\bf \#P}$ function of the tagged input $(x,r,j)$, because $F(x)$ is a ${\bf \#P}$ function and the terms $2^n$ and $j2^r$ are nonnegative functions counted by polynomial time non-deterministic guessing. As a result, the tagged language $O_t=\{(x,r,j)\mid B_1(X(x)+j2^r)\not\equiv 0\pmod 2\}$ belongs to ${\bf B_{|1|\oplus}P}$. We now have $X(x)+j2^r=2^r(Q_r(x)+j)+b_r$. Since $Q_r(x)+j>0$, the standard binary representation of $X(x)+j2^r$ consists of the binary representation of $Q_r(x)+j$ followed by the lower $r$ bit positions whose numerical value is $b_r$. Let $u_r$ be the number of $1$'s bits among these lower $r$ bit positions. Since $b_r$ is known, $u_r$ is also known. As a result, $B_1(X(x)+j2^r)\equiv B_1(Q_r(x)+j)+u_r\pmod 2$. Thus the oracle answer for $(x,r,j)$, corrected by the known bit $u_r\pmod 2$, gives $t(Q_r(x)+j)$. Then making the four oracle queries for $j$ equal to $0,1,2,3$ gives $t(Q_r(x))$, $t(Q_r(x)+1)$, $t(Q_r(x)+2)$, and $t(Q_r(x)+3)$. These four values determine whether $Q_r(x)$ is even or odd by the four consecutive value theorem for $t$. Thus they determine $a_r=Q_r(x)\bmod 2$.

Repeating this procedure for $r=0,1,\ldots,n-1$ recovers all $n$ bits of $F(x)$. Since $n=p(|x|)+1$ is polynomially bounded, this is a deterministic polynomial time oracle computation. After the bits $a_0,a_1,\ldots,a_{n-1}$ are recovered, the machine computes the integer $F(x)=\sum_{i=0}^{n-1}a_i2^i$ and outputs its standard binary encoding. If all recovered bits are $0$, then the machine outputs the standard encoding of $0$. Therefore the value of the ${\bf \#P}$ function $F$ is computed by a deterministic polynomial time oracle transducer with access to oracle ${\bf B_{|1|\oplus}P}$. Since $F\in{\bf \#P}$ was for any $F$, we can then conclude that ${\bf \#P}\subseteq{\bf FP}^{\bf B_{|1|\oplus}P}$.$\qed$
\end{proof}

\subsection{Distinct oracle queries}

The theorems that showed ${\bf PP}\subseteq{\bf P}^{\bf B_{|0|\oplus}P}$, ${\bf PP}\subseteq{\bf P}^{\bf B_{|1|\oplus}P}$, ${\bf \#P}\subseteq{\bf FP}^{\bf B_{|0|\oplus}P}$ and ${\bf \#P}\subseteq{\bf FP}^{\bf B_{|1|\oplus}P}$ in the previous two subsections relied upon the four consecutive value theorems that were proven and used in \cite{P26b} to recover the parity bit. However, there are repeated oracle queries when you unequivocally employ this method to recover all of the bits of a ${\bf \#P}$ function. We next prove that $2n+2$ distinct oracle queries suffices if you employ caching. 

\begin{theorem}
The $4n$ oracle queries can be replaced by $2n+2$ distinct oracle queries in the bit recovery procedure. 
\end{theorem}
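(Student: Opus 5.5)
The plan is to show that, once the oracle language is re-indexed by the numerical offset being added rather than by the pair $(r,j)$, the $4n$ queries of the bit recovery procedure take only $2n+2$ distinct values, so a deterministic machine that caches each answer issues at most $2n+2$ distinct queries. I will treat ${\bf B_{|0|\oplus}P}$ and ${\bf B_{|1|\oplus}P}$ together, since in both cases the procedure only ever asks for the parity of $B_0$ or of $B_1$ of numbers of the form $X(x)+j2^r$.

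The first step is to replace the tagged oracle $O=\{(x,r,j)\}$ by
\[
O'=\{(x,k)\mid B_0(X(x)+k)\not\equiv 0\pmod 2\},
\]
and analogously with $B_1$. Here $k$ is a nonnegative integer written in binary with at most $n+2$ bits. Membership in ${\bf B_{|0|\oplus}P}$ (respectively ${\bf B_{|1|\oplus}P}$) follows by the same closure argument used in the earlier theorems: $X(x)+k$ is a ${\bf \#P}$ function of $(x,k)$, since $k$ extra accepting paths can be added by polynomial time guessing. The correction by $z_r$ (or $u_r$) is carried out by the deterministic machine, not by the oracle. Therefore the answer to the query for $(r,j)$ is just the oracle bit on $(x,j2^r)$, and two pairs $(r,j)$ and $(r',j')$ with $j2^r=j'2^{r'}$ ask literally the same query.

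The second step is to count the set $S=\{j2^r\mid 0\le r\le n-1,\ 0\le j\le 3\}$.
\begin{itemize}
\item $j=0$ gives the single offset $0$.
\item $j=1$ gives $2^0,\ldots,2^{n-1}$.
\item $j=2$ gives $2^1,\ldots,2^{n}$, which overlaps the $j=1$ offsets except for $2^n$. Together, $j=1$ and $j=2$ give the $n+1$ offsets $2^0,\ldots,2^n$.
\item $j=3$ gives $3\cdot 2^0,\ldots,3\cdot 2^{n-1}$. These are $n$ values, pairwise distinct, and none is $0$ or a power of two because each has odd part $3$.
\end{itemize}
Hence $|S|=1+(n+1)+n=2n+2$.

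Finally, I would describe the caching machine. It maintains a table indexed by offsets $k$. At stage $r$ it needs the four bits for $k=j2^r$. It queries $O'$ only when $k$ is not yet in the table, then applies the known correction $z_r$ (respectively $u_r$) and the four consecutive value theorem exactly as before. The recovered bits, and hence correctness, are unchanged, and the number of distinct oracle queries equals $|S|=2n+2$. In fact a new query is needed only for $k=0$, $2^0$, $2^1$, $3$ at $r=0$, and only for the two new offsets $2^{r+1}$ and $3\cdot 2^r$ at each later stage, since $0$ and $2^r$ are already cached.

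The only delicate point is the first step. One must make sure that the oracle answer depends only on the numerical value $X(x)+j2^r$ and not on $r$, because in the tagged formulation of the previous theorems the queries $(x,r,2)$ and $(x,r+1,1)$ are syntactically different strings. Re-indexing by the offset $k$ resolves this. The remaining work is routine: checking that $k$ has polynomial length and that $O'$ stays in the same class.
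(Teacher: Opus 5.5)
Your proposal is correct and follows the paper's proof: the paper also lists the round-$r$ offsets $0$, $2^r$, $2^{r+1}$, $3\cdot 2^r$, counts $1+(n+1)+n=2n+2$ distinct values, and uses the odd factor $3$ to separate the $3\cdot 2^r$ offsets from the powers of two. Your re-indexing of the oracle by the numerical offset $k$, so that $(x,r,2)$ and $(x,r+1,1)$ become the same query, makes explicit a step the paper leaves implicit when it speaks of ``the oracle answers for $X+c$''; your per-stage tally $4+2(n-1)$ is a useful consistency check.
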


\begin{proof}
In round $r$, where $0\leq r\leq n-1$, the four queried values are
$X$, $X+2^r$, $X+2^{r+1}$, and $X+3\cdot 2^r$. Thus the offsets queried in round $r$ are $0$, $2^r$, $2^{r+1}$, and $3\cdot 2^r$. Across all rounds $r=0,1,\ldots,n-1$, the offset $0$ appears in every round, but needs to be queried only once. The offsets of the form $2^r$ or $2^{r+1}$ together give exactly $2^0,2^1,\ldots,2^n$, which are $n+1$ distinct offsets. The offsets of the form $3\cdot 2^r$ give exactly $3\cdot 2^0,3\cdot 2^1,\ldots,3\cdot 2^{n-1}$, which are $n$ distinct offsets. Also, no offset of the form $3\cdot 2^r$ is equal to a power of $2$, because $3\cdot 2^r$ has odd factor $3$. So the total number of distinct offsets is $1+(n+1)+n=2n+2$. Thus the bit recovery procedure needs only the oracle answers for $X+c$, where $c\in{0}\cup{2^0,2^1,\ldots,2^n}\cup{3\cdot 2^0,3\cdot 2^1,\ldots,3\cdot 2^{n-1}}$. These $2n+2$ oracle answers can be computed once and then reused in the appropriate rounds. Therefore, the $4n$ not necessarily distinct oracle queries can be replaced by $2n+2$ distinct oracle queries. $\qed$
\end{proof}

Note that the above theorem shows a cached query upper bound for the bit recovery procedure. It does not claim that $2n+2$ is the minimum possible number of distinct oracle queries.

\section{Properties of parity based bit-counting hierarchies}

We established in the previous section that ${\bf PP}\subseteq{\bf P}^{\bf B_{|0|\oplus}P}$, ${\bf PP}\subseteq{\bf P}^{\bf B_{|1|\oplus}P}$ and ${\bf P}^{\bf PP}={\bf P}^{\bf B_{|0|\oplus}P}={\bf P}^{\bf B_{|1|\oplus}P}$. We next prove the following two theorems that will guide us in showing that our parity based bit-counting hierarchies contain the counting hierarchy. 

\begin{theorem}
${\bf PP}^{\bf A}\subseteq{\bf P}^{({\bf B_{|i|\oplus}P})^{\bf A}}$  for every oracle class ${\bf A}$, where $i\in \{0,1\}$
\end{theorem}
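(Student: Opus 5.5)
The plan is to relativize the bit recovery procedure from the proofs of ${\bf PP}\subseteq{\bf P}^{\bf B_{|0|\oplus}P}$ and ${\bf PP}\subseteq{\bf P}^{\bf B_{|1|\oplus}P}$. Every step in those proofs is either a deterministic computation or a closure property of ${\bf \#P}$ realized by modifying a nondeterministic machine. Such steps survive uniformly when all nondeterministic machines are given access to an oracle $B\in{\bf A}$. I will treat the case $i=0$; the case $i=1$ is identical with $t$, $B_1$ and $u_r$ in place of $s$, $B_0$ and $z_r$.

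Fix $L\in{\bf PP}^{\bf A}$, witnessed by a nondeterministic polynomial time oracle machine $M^B$ with $B\in{\bf A}$, using exactly $p(|x|)\geq 1$ nondeterministic bits. Then $x\in L$ if and only if $A(x)=\#\operatorname{acc}_{M^B}(x)>2^{p(|x|)-1}$, so $A\in{\bf \#P}^B$. Setting $n=p+1$ and $X(x)=2^n+A(x)$, I will build a nondeterministic oracle machine $N^B$ on tagged inputs $(x,r,j)$, with $0\leq r\leq n-1$ and $j\in\{0,1,2,3\}$, whose number of accepting paths is exactly $X(x)+j2^r$. First $N^B$ branches on a selector. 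On one branch it simulates $M^B(x)$. On the other branches it produces exactly $2^n+j2^r$ accepting paths by guessing a string of length $n+2$ and accepting iff its value is below $2^n+j2^r$, after padding all branches to a common length. The padding must leave the count unchanged, which can be done by accepting on only one padding extension. Hence
\[
O_B=\{(x,r,j)\mid B_0(X(x)+j2^r)\not\equiv 0 \pmod 2\}\in({\bf B_{|0|\oplus}P})^B\subseteq({\bf B_{|0|\oplus}P})^{\bf A}.
\]
Here I read $({\bf B_{|0|\oplus}P})^{\bf A}$ as the class defined by the ${\bf B_{|0|\oplus}P}$ acceptance condition applied to the accepting-path count of a polynomial time nondeterministic machine with oracle from ${\bf A}$. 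I will state this reading explicitly, since it is the one for which relativization is routine.

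Next the deterministic base machine runs the bit recovery loop verbatim. In round $r$ it queries $O_B$ on $(x,r,0),\dots,(x,r,3)$ and corrects each answer by the known parity $z_r$ of the zeros among the low $r$ bits. This uses the decomposition $X(x)+j2^r=2^r(Q_r(x)+j)+b_r$ with $Q_r(x)+j>0$, so the base machine obtains $s(Q_r(x)+j)$. It then applies the four consecutive value theorem for $s$ from \cite{P26b} to obtain $a_r$. After $n$ rounds it knows $A(x)$ and compares it with $2^{p-1}$. The algebra is unchanged from the unrelativized proof, because it concerns only the integer $X(x)$ and not how it is computed. The total work is $4n$ queries and polynomial time, which gives $L\in{\bf P}^{({\bf B_{|0|\oplus}P})^{\bf A}}$.

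The main obstacle is the closure step: showing that $X(x)+j2^r$ is a ${\bf \#P}^B$ function of the tagged input with the same oracle $B$. This needs care with padding so that every path has equal length, as the ${\bf B_{|0|\oplus}P}$ definition uses a predicate $R$ over strings of length exactly $p(|x|)$. In the relativized setting $R$ becomes a polynomial time oracle predicate $R^B$, and I would verify that the padding construction keeps $R^B$ polynomial time relative to $B$. A secondary point is the uniformity of the oracle class. If ${\bf A}$ is a class rather than a single language, I will take the union over $B\in{\bf A}$ on both sides, which the argument handles since $B$ is fixed throughout.
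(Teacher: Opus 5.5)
Your proposal is correct and takes the same route as the paper. Both relativize the bit-recovery argument: the accepting-path count of the ${\bf PP}^{\bf A}$ machine, shifted to $X(x)+j2^r$, is a ${\bf \#P}^{B}$ function of the tagged input, so the same $4n$ queries go to a language in $({\bf B_{|i|\oplus}P})^{\bf A}$. The only difference is that you spell out the padding and closure details (the explicit oracle machine with exactly $X(x)+j2^r$ accepting paths), which the paper's short proof leaves implicit.
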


\begin{proof}
The proofs of  ${\bf PP}\subseteq{\bf P}^{\bf B_{|0|\oplus}P}$ and ${\bf PP}\subseteq{\bf P}^{\bf B_{|1|\oplus}P}$ work by recovering the accepting path count with polynomially many bit-counting parity queries. These proofs relativize because if the original ${\bf PP}$ machine has oracle access to a language in ${\bf A}$, then its accepting path count is a ${\bf \#P}^{\bf A}$ count, and the same bit recovery construction uses queries to $({\bf B_{|0|\oplus}P})^{\bf A}$ or $({\bf B_{|1|\oplus}P})^{\bf A}$. Therefore, ${\bf PP}^{\bf A}\subseteq{\bf P}^{({\bf B_{|0|\oplus}P})^{\bf A}}$ and ${\bf PP}^{\bf A}\subseteq{\bf P}^{({\bf B_{|1|\oplus}P})^{\bf A}}$.$\qed$
\end{proof}

\begin{theorem}
${\bf P}^{\bf A}\subseteq({\bf B_{|i|\oplus}P})^{\bf A}$ for every oracle class ${\bf A}$, where $i\in \{0,1\}$
\end{theorem}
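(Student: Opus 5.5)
Given $L\in{\bf P}^{\bf A}$, decided by a deterministic polynomial time machine $D$ with oracle $O\in{\bf A}$, I will build a nondeterministic polynomial time machine $N$ with the same oracle $O$. The number of accepting paths of $N$ will be $1$ when $D$ accepts and $0$ when $D$ rejects, or some other fixed pair of counts whose $B_0$ or $B_1$ parities differ. Membership in $L$ is then read off by the parity-of-bit-count acceptance condition. This is the relativized version of ${\bf P}\subseteq{\bf B_{|i|\oplus}P}$, which also follows from ${\bf P}\subseteq{\bf NP}\subseteq{\bf B_{|i|\oplus}P}$ as proven in \cite{P26a}. I will give the direct construction so that relativization is transparent.

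For $i=1$: on input $x$, $N$ guesses $y$ of length $p(|x|)$, with $p\geq 1$ fixed. It runs $D^O(x)$ deterministically, asking the same oracle queries, and accepts on the single path $y=0^{p(|x|)}$ exactly when $D$ accepts. Every other path rejects. The count is therefore $1$ or $0$, and $B_1(1)=1$ is odd while $B_1(0)=0$ is even, so $x\in L$ if and only if the $B_1$ count is odd. The predicate $R^O(x,y)$ is polynomial time relative to $O$, so $L\in({\bf B_{|1|\oplus}P})^{\bf A}$.

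For $i=0$, the counts $0$ and $1$ both have $B_0$ odd ($B_0(0)=1$ and $B_0(1)=0$). So I will instead arrange for $N$ to accept on exactly two designated paths when $D$ accepts and on one designated path when $D$ rejects. Since $B_0(2)=1$ is odd and $B_0(1)=0$ is even, this gives $x\in L$ if and only if $B_0$ of the count is odd. Alternatively, I could use counts $2$ and $3$, or invoke closure of $({\bf B_{|0|\oplus}P})^{\bf A}$ under complement, which would require relativizing \cite{P26b}; the direct count choice avoids that.

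The only point requiring care is confirming that the constant-count constructions relativize. That is, the paths of $N$ must all make identical oracle calls to $O$, and the acceptance condition must depend only on the count. This is immediate, since $D$'s computation is simulated identically on every path and the path selection does not involve the oracle. I therefore expect no real obstacle beyond choosing the right pair of counts for the $B_0$ case.
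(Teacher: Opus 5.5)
Your proposal is correct and is essentially the paper's argument: simulate the deterministic oracle machine on one designated path so that the count is a constant whose bit-count parity encodes membership, and your $i=1$ case is identical to the paper's $H_1$. One correction for $i=0$: $B_0(1)=0$ is even, so the counts $0$ and $1$ already have different $B_0$ parities. The only problem with them is orientation, and the paper fixes this by taking count $0$ iff $x\in L$, using the trivial closure of ${\bf P}^{\bf A}$ under complement. That is equally valid as your $2$-versus-$1$ choice and needs no complement closure of $({\bf B_{|0|\oplus}P})^{\bf A}$.
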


\begin{proof}
Let $L\in{\bf P}^{\bf A}$. Since deterministic polynomial time oracle classes are closed under complement, $\overline{L}\in{\bf P}^{\bf A}$ as well.

To show $L\in({\bf B_{|0|\oplus}P})^{\bf A}$, define a ${\bf \#P}^{\bf A}$ function $H_0(x)$ by letting the machine have one computation path and accepting on that path exactly when $x\notin L$. Thus $H_0(x)=0$ if $x\in L$, and $H_0(x)=1$ if $x\notin L$. Since $B_0(0)=1$ is odd and $B_0(1)=0$ is even, we have $x\in L$ if and only if $B_0(H_0(x))\not\equiv 0\pmod 2$. Therefore, $L\in({\bf B_{|0|\oplus}P})^{\bf A}$.

To show $L\in({\bf B_{|1|\oplus}P})^{\bf A}$, define a ${\bf \#P}^{\bf A}$ function $H_1(x)$ by letting the machine have one computation path and accepting on that path exactly when $x\in L$. Thus $H_1(x)=1$ if $x\in L$, and $H_1(x)=0$ if $x\notin L$. Since $B_1(1)=1$ is odd and $B_1(0)=0$ is even, we have $x\in L$ if and only if $B_1(H_1(x))\not\equiv 0\pmod 2$. Therefore, $L\in({\bf B_{|1|\oplus}P})^{\bf A}$.$\qed$
\end{proof}

\begin{theorem}
${\bf CH}\subseteq{\bf \Delta B_{|0|\oplus}H}={\bf \Delta B_{|1|\oplus}H}$.
\end{theorem}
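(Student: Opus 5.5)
The plan is to prove by induction on $k$ that ${\bf C_kP}\subseteq{\bf \Delta B_{|0|\oplus}^{k}}$ for every $k\geq 0$. Taking unions then gives ${\bf CH}\subseteq{\bf \Delta B_{|0|\oplus}H}$. The equality with ${\bf \Delta B_{|1|\oplus}H}$ is already known from \cite{P26b}, or can be obtained by running the same induction with $i=1$.

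The base case $k=0$ is immediate, since ${\bf C_0P}={\bf P}={\bf \Delta B_{|0|\oplus}^{0}}$. For $k=1$, the earlier theorem ${\bf PP}\subseteq{\bf P}^{\bf B_{|0|\oplus}P}$ gives ${\bf C_1P}\subseteq{\bf \Delta B_{|0|\oplus}^{1}}$ directly.

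For the inductive step, assume ${\bf C_kP}\subseteq{\bf \Delta B_{|0|\oplus}^{k}}={\bf P}^{{\bf \Sigma B_{|0|\oplus}^{k}}}$ with $k\geq 1$. The argument is a chain of containments:
\[
{\bf C_{k+1}P}={\bf PP}^{\bf C_kP}\subseteq {\bf PP}^{{\bf P}^{{\bf \Sigma B_{|0|\oplus}^{k}}}}\subseteq {\bf P}^{({\bf B_{|0|\oplus}P})^{{\bf P}^{{\bf \Sigma B_{|0|\oplus}^{k}}}}}.
\]
The first inclusion is monotonicity of oracle classes under the induction hypothesis. The second applies the relativized theorem ${\bf PP}^{\bf A}\subseteq{\bf P}^{({\bf B_{|i|\oplus}P})^{\bf A}}$ with ${\bf A}={\bf P}^{{\bf \Sigma B_{|0|\oplus}^{k}}}$.

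It then remains to show $({\bf B_{|0|\oplus}P})^{{\bf P}^{\bf C}}=({\bf B_{|0|\oplus}P})^{\bf C}$ for ${\bf C}={\bf \Sigma B_{|0|\oplus}^{k}}$. The nondeterministic polynomial time machine defining the ${\bf \#P}^{{\bf P}^{\bf C}}$ count can simulate each ${\bf P}^{\bf C}$ query deterministically along its own path, querying ${\bf C}$ directly. This leaves the accepting path count unchanged, so the count is a ${\bf \#P}^{\bf C}$ function. The right-hand side then becomes ${\bf P}^{({\bf B_{|0|\oplus}P})^{{\bf \Sigma B_{|0|\oplus}^{k}}}}={\bf P}^{{\bf \Sigma B_{|0|\oplus}^{k+1}}}={\bf \Delta B_{|0|\oplus}^{k+1}}$ by the definitions, which closes the induction.

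The main obstacle is this absorption of the ${\bf P}$ layer. One must check that the path count is preserved exactly, with no extra or lost paths, since the parity of $B_0$ of the count is sensitive to every unit. Deterministic simulation of the ${\bf P}^{\bf C}$ computation on each path adds no branching, so the count is unchanged. A secondary subtlety is that the oracle at each level is a class rather than a single language. I would handle this by fixing, at each step, one complete-enough oracle language from the relevant class; a polynomial time machine makes only finitely many distinct types of oracle calls, and these can be combined into a single tagged language in the class, as is done elsewhere in the paper.
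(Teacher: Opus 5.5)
Your proof is correct and is essentially the paper's argument. It uses the same induction ${\bf C_kP}\subseteq{\bf \Delta B_{|0|\oplus}^{k}}$, driven by the relativized simulation ${\bf PP}^{\bf A}\subseteq{\bf P}^{({\bf B_{|0|\oplus}P})^{\bf A}}$ and the absorption $({\bf B_{|0|\oplus}P})^{{\bf P}^{\bf C}}\subseteq({\bf B_{|0|\oplus}P})^{\bf C}$, and it cites \cite{P26b} for the equality of the two $\Delta$ hierarchies. The differences are cosmetic: you substitute the induction hypothesis before invoking the relativized theorem rather than after, and your separate $k=1$ base case makes ${\bf \Sigma B_{|0|\oplus}^{k+1}}=({\bf B_{|0|\oplus}P})^{{\bf \Sigma B_{|0|\oplus}^{k}}}$ hold literally by definition. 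Your alternative of rerunning the induction with $i=1$ would only give ${\bf CH}\subseteq{\bf \Delta B_{|1|\oplus}H}$, not the equality of the two hierarchies, so the citation is what carries that part.
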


\begin{proof}
We prove the level by level containment ${\bf C_kP}\subseteq{\bf \Delta B_{|0|\oplus}^{k}}$ for every $k\geq 0$, where ${\bf C_kP}$ denotes the $k$th level of the counting hierarchy. For $k=0$, we have ${\bf C_0P}={\bf P}={\bf \Delta B_{|0|\oplus}^{0}}$, so the containment is immediate.

We next  assume ${\bf C_kP}\subseteq{\bf \Delta B_{|0|\oplus}^{k}}$. By definition of the counting hierarchy, ${\bf C_{k+1}P}={\bf PP}^{\bf C_kP}$. By the relativized simulation theorem 9, ${\bf PP}^{\bf C_kP}\subseteq{\bf P}^{({\bf B_{|0|\oplus}P})^{\bf C_kP}}$. Using the induction hypothesis, oracle access to ${\bf C_kP}$ can be simulated by oracle access to ${\bf \Delta B_{|0|\oplus}^{k}}$. Thus ${\bf C_{k+1}P}\subseteq{\bf P}^{({\bf B_{|0|\oplus}P})^{{\bf \Delta B_{|0|\oplus}^{k}}}}$.

By definition, ${\bf \Delta B_{|0|\oplus}^{k}}={\bf P}^{{\bf \Sigma B_{|0|\oplus}^{k}}}$. Since deterministic polynomial time oracle computations can be absorbed into the oracle access, we have $({\bf B_{|0|\oplus}P})^{{\bf \Delta B_{|0|\oplus}^{k}}} =({\bf B_{|0|\oplus}P})^{{\bf P}^{{\bf \Sigma B_{|0|\oplus}^{k}}}} \subseteq({\bf B_{|0|\oplus}P})^{{\bf \Sigma B_{|0|\oplus}^{k}}} ={\bf \Sigma B_{|0|\oplus}^{k+1}}$. Therefore, ${\bf C_{k+1}P}\subseteq{\bf P}^{{\bf \Sigma B_{|0|\oplus}^{k+1}}}
={\bf \Delta B_{|0|\oplus}^{k+1}}$. This completes the induction.

Taking unions over all $k$ gives ${\bf CH}\subseteq{\bf \Delta B_{|0|\oplus}H}$. Since we already established ${\bf \Delta B_{|0|\oplus}H}={\bf \Delta B_{|1|\oplus}H}$, we can then conclude that
${\bf CH}\subseteq{\bf \Delta B_{|0|\oplus}H}={\bf \Delta B_{|1|\oplus}H}$.$\qed$
\end{proof}
\newpage

\begin{theorem}
${\bf CH}\subseteq{\bf \Sigma B_{|0|\oplus}H}={\bf \Sigma B_{|1|\oplus}H}$.
\end{theorem}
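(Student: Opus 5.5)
The plan is to reduce this statement to the $\Delta$-version just proved, and then show that each $\Delta$-level sits inside the next $\Sigma$-level. By the previous theorem, ${\bf C_kP}\subseteq{\bf \Delta B_{|0|\oplus}^{k}}={\bf P}^{{\bf \Sigma B_{|0|\oplus}^{k}}}$ for every $k\geq 0$. It therefore suffices to prove the level-wise containment
\[
{\bf \Delta B_{|0|\oplus}^{k}}\subseteq{\bf \Sigma B_{|0|\oplus}^{k+1}}\quad\text{for every } k\geq 0 .
\]
Once this holds, we get ${\bf C_kP}\subseteq{\bf \Sigma B_{|0|\oplus}^{k+1}}$. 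Taking unions over $k$ then gives ${\bf CH}\subseteq{\bf \Sigma B_{|0|\oplus}H}$. The equality ${\bf \Sigma B_{|0|\oplus}H}={\bf \Sigma B_{|1|\oplus}H}$ from \cite{P26b} finishes the statement.

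For the level-wise containment I would apply the theorem ${\bf P}^{\bf A}\subseteq({\bf B_{|i|\oplus}P})^{\bf A}$ with $i=0$ and ${\bf A}={\bf \Sigma B_{|0|\oplus}^{k}}$. This yields
\[
{\bf \Delta B_{|0|\oplus}^{k}}={\bf P}^{{\bf \Sigma B_{|0|\oplus}^{k}}}\subseteq({\bf B_{|0|\oplus}P})^{{\bf \Sigma B_{|0|\oplus}^{k}}}.
\]
For $k\geq 1$ the right-hand side is ${\bf \Sigma B_{|0|\oplus}^{k+1}}$ by definition. The case $k=0$ needs separate care, because the definition gives ${\bf \Sigma B_{|0|\oplus}^{1}}={\bf B_{|0|\oplus}P}$ directly rather than as ${\bf B_{|0|\oplus}P}^{\bf P}$. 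There I would observe that ${\bf B_{|0|\oplus}P}^{\bf P}={\bf B_{|0|\oplus}P}$, since a polynomial-time oracle can be simulated inside the predicate $R$. Alternatively, I would simply note ${\bf C_0P}={\bf P}\subseteq{\bf B_{|0|\oplus}P}$ via the theorem with ${\bf A}$ trivial.

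The main obstacle is making sure the relativized statement ${\bf P}^{\bf A}\subseteq({\bf B_{|0|\oplus}P})^{\bf A}$ applies when ${\bf A}$ is a class, not a single language. A language in ${\bf P}^{{\bf \Sigma B_{|0|\oplus}^{k}}}$ uses one oracle language $A\in{\bf \Sigma B_{|0|\oplus}^{k}}$. The construction of $H_0$ uses that same oracle $A$ on a single path, so the resulting language is in ${\bf B_{|0|\oplus}P}^{A}\subseteq{\bf \Sigma B_{|0|\oplus}^{k+1}}$. The proof of that earlier theorem already handles this, so I expect no real difficulty.

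Alternatively, one could run an induction directly. Using ${\bf C_{k+1}P}\subseteq{\bf \Delta B^{k+1}_{|0|\oplus}}$ and the containment above, we get ${\bf C_{k+1}P}\subseteq{\bf \Sigma B_{|0|\oplus}^{k+2}}$, which is all that is needed for the union.
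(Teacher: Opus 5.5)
Your proposal is correct and follows the paper's proof essentially step for step: you combine ${\bf C_kP}\subseteq{\bf \Delta B_{|0|\oplus}^{k}}$ from the preceding theorem with ${\bf P}^{\bf A}\subseteq({\bf B_{|0|\oplus}P})^{\bf A}$ at ${\bf A}={\bf \Sigma B_{|0|\oplus}^{k}}$, take unions over $k$, and invoke ${\bf \Sigma B_{|0|\oplus}H}={\bf \Sigma B_{|1|\oplus}H}$. Your separate handling of $k=0$ is a small point the paper glosses over when it writes ``by definition'' for all $k\geq 0$: there ${\bf \Sigma B_{|0|\oplus}^{1}}$ is defined directly as ${\bf B_{|0|\oplus}P}$, so one needs ${\bf B_{|0|\oplus}P}^{\bf P}={\bf B_{|0|\oplus}P}$.
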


\begin{proof}
From the preceding theorem, for every $k\geq 0$ we have ${\bf C_kP}\subseteq{\bf \Delta B_{|0|\oplus}^{k}}$. Also, for every $k\geq 0$, ${\bf \Delta B_{|0|\oplus}^{k}}={\bf P}^{{\bf \Sigma B_{|0|\oplus}^{k}}}$. By the relativized simulation theorem 10, ${\bf P}^{{\bf \Sigma B_{|0|\oplus}^{k}}}\subseteq({\bf B_{|0|\oplus}P})^{{\bf \Sigma B_{|0|\oplus}^{k}}}$. By definition, $({\bf B_{|0|\oplus}P})^{{\bf \Sigma B_{|0|\oplus}^{k}}} ={\bf \Sigma B_{|0|\oplus}^{k+1}}$. Therefore, ${\bf \Delta B_{|0|\oplus}^{k}}\subseteq{\bf \Sigma B_{|0|\oplus}^{k+1}}$.

Combining the two containments gives ${\bf C_kP}\subseteq{\bf \Sigma B_{|0|\oplus}^{k+1}}$ for every $k\geq 0$. Taking unions over all $k$ gives ${\bf CH}\subseteq{\bf \Sigma B_{|0|\oplus}H}$. Since we already established ${\bf \Sigma B_{|0|\oplus}H}={\bf \Sigma B_{|1|\oplus}H}$, we can then conclude that ${\bf CH}\subseteq{\bf \Sigma B_{|0|\oplus}H}={\bf \Sigma B_{|1|\oplus}H}$.$\qed$
\end{proof}

It was previously shown in \cite{P26b} that ${\bf CH}$ contains all of these parity based bit-counting hierarchies, so then we basically have that ${\bf CH}$ and the parity based bit-counting hierarchies are equivalent when you take their unions over all levels.

\section{Conclusion}

We studied some additional properties of parity based bit-counting complexity classes ${\bf B_{|0|\oplus}P}$ and ${\bf B_{|1|\oplus}P}$. We first proved that ${\bf C_{=}P}={\bf ES}={\bf MNS}\subseteq{\bf P}^{\bf B_{|0|\oplus}P}={\bf P}^{\bf B_{|1|\oplus}P}$. We did this by noticing that the bit length of an integer changes exactly when you add a $1$ to a Mersenne number and that this bit length difference is detectable by querying ${\bf B_{|0|\oplus}P}$ and ${\bf B_{|1|\oplus}P}$ oracles. 

We then extended upon the four consecutive value theorems that were proven in \cite{P26b} to show that ${\bf \oplus P}\subseteq {\bf P}^{{\bf B_{|0|\oplus}P}}$ and ${\bf \oplus P}\subseteq {\bf P}^{{\bf B_{|1|\oplus}P}}$. We recovered the ${\bf \#P}$ value one bit at a time by padding and shifting the value and then querying either ${\bf B_{|0|\oplus}P}$ or ${\bf B_{|0|\oplus}P}$ oracles. This enabled us to not only prove that ${\bf PP}\subseteq {\bf P}^{{\bf B_{|0|\oplus}P}}$ and ${\bf PP}\subseteq {\bf P}^{{\bf B_{|1|\oplus}P}}$, but also prove that ${\bf \#P}\subseteq {\bf FP}^{{\bf B_{|0|\oplus}P}}$ and ${\bf \#P}\subseteq {\bf FP}^{{\bf B_{|1|\oplus}P}}$. We also showed that caching the values of these queries can reduce the number of queries from $4n$ to $2n+2$. 

We then showed that ${\bf P}^{\bf PP}= {\bf P}^{{\bf B_{|0|\oplus}P}} = {\bf P}^{{\bf B_{|1|\oplus}P}}$. When this result is combined with the result from \cite{P26a}, we obtained the following equivalence:  ${\bf P}^{\bf PP}={\bf P}^{{\bf B_{|0|=|1|}P}}={\bf P}^{{\bf B_{|0|>|1|}P}}={\bf P}^{{\bf B_{|0|<|1|}P}}= {\bf P}^{{\bf B_{|0|\oplus}P}} = {\bf P}^{{\bf B_{|1|\oplus}P}}$. Essentially, parity based bit-counting complexity classes and comparison based bit-counting complexity classes are Turing equivalent not only among themselves, but also to ${\bf PP}$. This result solidifies the observation in \cite{KPZ99} that Turing reductions blur structural differences. After all, there is quite a bit of difference between knowing whether a Boolean formula in CNF has 1) more satisfying truth assignments than falsifying truth assignments, or 2) the number of satisfying truth assignments in its binary expansion has i) equal number of 0's bits and 1's bits, or ii) more number of 0's bits than 1's bits, or iii) less number of 0's bits than 1's bits, or iv) odd number of 0's bits, or v) odd number of 1's bits. At first glance, it seemed improbable for these languages to have any equivalence under any scenario, but then in an absolutely amazing way they ended up being Turing equivalent. 

As a next step, we can study the semantic variants of these bit-counting complexity classes as it was previously noted in \cite{P26a}. Or we can study the logspace variants of these bit-counting complexity classes as it was previously noted in \cite{P26b}. Furthermore, perhaps there is a way to show that these bit-counting complexity classes contain the ${\bf PH}$ independent of Toda's theorem \cite{T89} as well as any other method used in it, such as the Valiant-Vazirani theorem \cite{VV85}. 

\bibliographystyle{alpha}
\bibliography{Even_Morel_Properties_Of_Parity_Based_Bit_Counting_Complexity_Classes3}

\end{document}